\theoremstyle{plain}
\theoremstyle{definition}
\newtheorem{theorem}{Theorem}
\newtheorem{lemma}{Lemma}
\newtheorem{corollary}{Corollary}
\newtheorem{proposition}{Proposition}
\newtheorem{definition}{Definition}
\newtheorem{example}{Example}
\theoremstyle{remark}
\newtheorem{remark}{Remark}
\begin{document}

\title{Multi-group Bayesian Games}

\author{Hongxing Yuan, Xuan Zhang, Chunyu Wei*, and Yushun Fan*,~\IEEEmembership{Member,~IEEE}
\thanks{}
\thanks{Hongxing Yuan, Xuan Zhang are with the Department of Automation, Tsinghua University, and Beijing National Research Center for
Information Science and Technology, Beijing 100084, China.}
\thanks{*Chunyu Wei is with the School of Information, Renmin University of China, Beijing 100872, China.}
\thanks{*Yushun Fan is with the Department of Automation, Tsinghua University, and Beijing National Research Center for
Information Science and Technology, Beijing 100084, China (e-mail: 
fanyus@tsinghua.edu.cn). }
\thanks{This work was supported by the National Nature Science Foundation of China under Grant 62173199.}}

\markboth{Journal of \LaTeX\ Class Files,~Vol.~14, No.~8, August~2021}
{Shell \MakeLowercase{\textit{et al.}}: A Sample Article Using IEEEtran.cls for IEEE Journals}

\IEEEpubid{0000--0000/00\$00.00~\copyright~2021 IEEE}

\maketitle

\begin{abstract}
This paper presents a model of multi-group Bayesian games (MBGs) to describe the group behavior in Bayesian games, and gives methods to find (strongly) multi-group Bayesian Nash equilibria (MBNE) of this model with a proposed transformation. MBNE represent the optimal strategy \textit{profiles} under the situation where players within a group play a cooperative game, while strongly MBNE characterize the optimal strategy \textit{profiles} under the situation where players within a group play a noncooperative game. Firstly, we propose a model of MBGs and give a transformation to convert any MBG into a multi-group ex-ante agent game (MEAG) which is a normal-form game. Secondly, we give a sufficient and necessary condition for a MBG's MEAG to be (strongly) potential. If it is (strongly) potential, all its (strongly) Nash equilibria can be found, and then all (strongly) MBNE of the MBG can be obtained by leveraging the transformation's good properties. Finally, we provide algorithms for finding (strongly) MBNE of a MBG whose MEAG is (strongly) potential and use an illustrative example to verify the correctness of our results.
\end{abstract}

\begin{IEEEkeywords}
Multi-group Bayesian games, multi-group ex-ante agent transformation, (strongly) potential equation, (strongly) multi-group Bayesian Nash equilibria.
\end{IEEEkeywords}

\section{Introduction}
\label{Introduction}

\IEEEPARstart{C}{lassical} game theory was first proposed by mathematicians \cite{Neumann1944}. A normal-form game \cite{Frihauf2012} contains a player set, an action \textit{profile} set and a payoff function. It is a commonly used model of games with complete information which can be categorized into two types: noncooperative games and cooperative games. Since each player may not have complete knowledge of the action sets and payoff functions of other players in practice, scholars have proposed two models of games with incomplete information: Harsanyi model \cite{Harsanyi1} and Aumann model \cite{Aumann1976}.

In Harsanyi games \cite{Harsanyi2} which are also called Bayesian games, factors type and belief are added to describe the uncertainty. Bayesian games can be used to describe and solve scenarios with incomplete information, so they have received more and more attention in the fields of machine learning and artificial intelligence, such as auctions \cite{bidding1,Elyakime1994}, routing problems \cite{routing1,routing2} , dynamic pricing \cite{DynPri1,DynPri2}, security and defense \cite{Komal2024,defense1}, multi-agent reinforcement learning \cite{multiagent2024,multirei1,multirei2}, multi-agent communication \cite{multicom1,multicom2}, and so on. Aumann model describes the concept of common knowledge through the mathematical definition of knowledge in incomplete information. It can be proved that Harsanyi model and Aumann model are equivalent under certain conditions \cite{Aumann1976}.

Nash equilibrium is a pivotal concept in game theory, where no player wants to adjust his action unilaterally if the others' actions remain unchanged. Thus, it's an optimal operating point in practice. A Bayesian game is with uncertainty, so each player's action selection needs to be based on his type. Therefore, a Bayesian Nash equilibrium is not a fixed action \textit{profile}, but a strategy, which is a mapping from the type \textit{profile} set to the action \textit{profile} set \cite{Harsanyi1,Gibbons}. It's difficult to directly find all Bayesian Nash equilibria on a mapping space. Therefore, scholars proposed several transformations \cite{Harsanyi2,Harsanyi3,Wu} to convert a Bayesian game into a normal-form game which is with complete information. In this way, Bayesian Nash equilibria can be found by finding the normal-form game's Nash equilibria.

Group behavior is so common in real life that it has been introduced into many fields \cite{Glanzer,Perc,Alvarez-Rodriguez}. In practical scenarios, players are typically embedded in groups rather than acting as isolated individuals. Thus, the group behavior in game theory also holds substantial practical significance. For games with complete information, scholars put forward a concept of multi-cluster game and gave a method to find its Nash equilibria \cite{Ye2018,Meng,Zeng}. In multi-cluster games, the goal of each group is to maximize its payoff, which is the average payoff of all players within the group. Each group assigns a representative player to interact with representative players from other groups through the network.

For games with incomplete information (i.e., Bayesian games), existing studies on group behavior primarily focus on whether players choose to cooperate \cite{ichiishi1992bayesian}, how to design cooperative payoff allocation mechanisms to prevent players from misreporting their types\cite{safronov2018coalition}. As far as we know, there is no relevant research addressing scenarios where the group structure and cooperative payoff allocation mechanism are preassigned according to actual scenarios, and players aren't required to choose whether to misreport. However, in real life, these scenarios are very common.

For example, due to the limited communication ability, all agents are divided into groups. Within the groups, agents play games with complete information, while between groups, agents play games with incomplete information \cite{multiUAV1,multiUAV2}. Depending on the task, the relationships between agents within a group can be divided into two situations \cite{TwoSitua1,TwoSitua2}.
\IEEEpubidadjcol
If agents within a group are working together to complete a task, then they play a cooperative game with complete information. If agents within a group have different tasks, then they play a noncooperative game with complete information. 

Similar dynamics emerge in social groupings: individuals typically possess full information about peers within their social circles but limited knowledge of outsiders \cite{auction2024,auction2}. Also, people within a social circle may choose strategies together and share their payoffs equally to reduce risks. In the other situation, they may just be familiar with each other, but choose strategies independently to maximize their own payoffs.

To describe and investigate the above scenarios, we propose and analyze a novel game termed multi-group Bayesian game (MBG). As two fundamental challenges in analyzing any game are: the verification of a game's potentiality and the methods to find Nash equilibria, we will elaborate on the potentiality of a game and introduce a mathematical tool for finding Nash equilibria of a game, namely the semi-tensor product (STP) in the following. 

If a game is potential, then it has Nash equilibria which can be obtained by finding any potential function of the game \cite{Rosenthal,Heumen}. In addition, this game will converge to pure Nash equilibria in the process of dynamic evolution. Considering these good properties of potential games, they are favored in practical applications \cite{potential}. And the STP of matrices has been verified to be a useful tool in studying potential games.

The semi-tensor product (STP) of matrices was first proposed by Professor Cheng \cite{cheng2011}. In contrast to the traditional matrix product, it has no limitation on the dimensions of the two matrices, so it can be carried out between any two matrices. It has been widely used to study problems in many fields such as Boolean networks\cite{lu2025link,liu2025analysis,feng2023original,Li2022reachability,cheng2021self}, mix-valued logical networks\cite{li2021,Lin2023,lu2020event}, discrete event systems\cite{zhang2022initial,zhou2024three}. In particular, many interesting results have been obtained in the study of games by using the STP of matrices \cite{Zhu2023group,Yuan2022,feng2025iterative,wang2024stability,Wu}. 

In the study of games' potentiality, with the help of the STP, the problem of verifying whether a game is a potential can be transformed into a problem of verifying whether a linear system which is called the potential equation has a solution, and if there exists a solution, the potential function can be obtained directly from the solution. Then we can easily get all Nash equilibria\cite{Wu}. Therefore, the computational complexity of verifying the game potentiality was reduced by introducing the STP \cite{cheng2014}. Furthermore, \cite{Liu2016} improved this method and proved that the computational complexity of the improved method has reached the minimum.

Motivated by the above, we propose the concepts of the multi-group Bayesian game (MBG) and its multi-group ex-ante agent game (MEAG) which is a normal game. Then we give the properties of this multi-group ex-ante agent transformation. There is a one-to-one correspondence between each (strongly) Nash equilibrium of the MBG and its MEAG, and this transformation preserves (strongly) potentiality. Therefore, as long as the MBG's MEAG is (strongly) potential, no matter whether the MBG is (strongly) potential or not, we can obtain the MBG's (strongly) multi-group Bayesian Nash equilibria by seeking for (strongly) Nash equilibria of its MEAG. 

In order to obtain the MEAG's (strongly) Nash equilibria, we give the algebraic form and (strongly) potential equation of the MEAG. If there exists a solution to the (strongly) potential equation, then the MEAG is (strongly) potential, and its (strongly) potential functions can be obtained from the solutions to the (strongly) potential equation. After finding out the MEAG's all (strongly) Nash equilibria by using an arbitrary (strongly) potential function, we can obtain all (strongly) multi-group Bayesian Nash equilibria of its MBG based on the one-to-one correspondence described above.

The main innovations of this paper are:
\begin{itemize}[leftmargin=*, topsep=0pt, itemsep=0pt, parsep=0pt]
  \item propose a novel game called the MBG to describe the group behavior in a Bayesian game and provide two types of definitions for Nash equilibria and potentiality, where we use ``strongly'' to distinguish cooperative and non-cooperative behaviors in groups. 
  \item give a transformation which converts a MBG to a MEAG. Then construct the algebraic form and (strongly) potential equation of a MBG's MEAG via the STP of matrices. Through the (strongly) potential equation, the (strongly) potentiality of the MEAG can be verified. And if the MEAG is (strongly) potential, then all its (strongly) Nash equilibria can be obtained.
  \item provide the properties of the above transformation, based on which we can obtain (strongly) multi-group Bayesian Nash equilibria of MBGs through finding (strongly) Nash equilibria of the (strongly) multi-group ex-ante agent potential games.
\end{itemize}

The arrangement for the rest of this paper is as follows. Some essential notations and concepts are given in \Cref{Pre}. \Cref{MBG} proposes the definitions and concepts of the MBG and its MEAG, then introduces and proves the properties of this transformation. In \Cref{Ver}, the algebraic form of a MBG's MEAG and methods for finding (strongly) MBNE are proposed, which is followed by a brief conclusion in \Cref{sec:conclusion}.

\section{Preliminaries\label{Pre}}

In this section, we give some essential notations and concepts of the semi-tensor product (STP) and Bayesian games. 

\subsection{Notations}
\begin{itemize}[leftmargin=*]
  \item $\mathbb{R}$ denotes the set of real numbers.
  \item $\ltimes$ represents the STP, which generalizes the ordinary matrix product. And for convenience, it may be omitted hereafter.
  \item $\otimes$ is the Kronecker product.
  \item $\mathcal{D}_r:=\{1, 2, \cdots, r\}$, $\mathbf{1}_{r}:=[1,1, \cdots ,1]^{T}$.
  \item $\mathbb{R}_{l \times k}$ denotes the set of all $l \times k$ dimension real matrices.
  \item $I_{r}$ represents the $r$ dimension identify matrix.
  \item For $L \in \mathbb{R}_{m \times n}$, $\operatorname{Col}_{l}(L)$ denotes the $l$-th column of matrix $L$, and $\operatorname{Col}(L)$ indicates $\{\operatorname{Col}_{l}(L) : 1 \leq l \leq n\}$.
  \item $\delta_{r}^{l}:=\operatorname{Col}_{l}(I_{r})$, $\Delta_{r}:=\{\delta_{r}^{l} : 1 \leq l \leq r\}$.
  \item $L \in \mathbb{R}_{m \times n}$ is termed as a logical matrix, if $\operatorname{Col}(L) \subseteq \Delta_{m}$. 
  \item $\mathcal{L}_{m \times n}$ expresses the set of all $m \times n$ dimension logical matrices.
  \item $\mid G \mid$ stands for the number of elements in set $G$.
\end{itemize}

\subsection{Semi-tensor product of matrices\label{stp}}

We give some basic concepts about the STP and its excellent properties used in the derivation of this paper. One can refer to \cite{cheng2011} for more details.

\begin{definition}\label{def1}\cite{cheng2011}
The semi-tensor product of matrices $A \in \mathbb{R}_{a\times b}$ and $B\in \mathbb{R}_{ c\times d}$ is defined as
\begin{equation}
A\ltimes B=(A\otimes I_{\frac{x}{b}})(B\otimes I_{\frac{x}{c}}),
\end{equation}
where $x=lcm(b,c)$ is the least common multiple of $b$ and $c$.
\end{definition}

\begin{lemma}\cite{cheng2011}\label{lem1}
\begin{enumerate}[leftmargin=*, topsep=0pt, itemsep=0pt, parsep=0pt]
  \item[(i)] For $Q \in \mathbb{R}_{q \times 1}$ and $C \in \mathbb{R}_{a \times b}$, we have
  $$
  Q \ltimes C=(I_{q} \otimes C) \ltimes Q.
  $$
  \item[(ii)] For $H \in \mathbb{R}_{h \times 1}$ and $Q \in \mathbb{R}_{q \times 1}$, we have
  $$
  Q \ltimes H=W_{[h, q]} \ltimes H \ltimes Q,
  $$
  where $W_{[h, q]} \in \mathbb{R}_{hq \times hq}$ denotes the swap matrix which satisfies $\operatorname{Col}_{(i-1) q+j}(W_{[h, q]})=\delta_{hq}^{i+(j-1) h}, i=$ $1, \cdots, h, j=1, \cdots, q$.
  \item[(iii)] For $Q \in \Delta_{q}$, we have
  $$
  Q^{2}=O_{q} Q,
  $$
  where $O_{q}:=\delta_{q^{2}}[1,q+2,2q+3 ,\cdots,q^{2}] \in \mathcal{L} _{q^{2} \times q}$ denotes the order reducing matrix.
  \item[(iv)] For $H \in \Delta_{h}, Q \in \Delta_{q}, X \in \Delta_{x}$, we have
  \begin{equation*}
  M_{[h, q, x]} H Q X=Q,\label{M_nlm}
  \end{equation*}
  where $M_{[h, q, x]}=\mathbf{1}_{h}^{T} \otimes I_{q} \otimes \mathbf{1}_{x}^{T}$.
  \item[(v)] For $A_i \in \Delta_{b_i}$, $M_i \in \mathbb{R}_{a_i \times b_i}$, $i \in [1;r]$, we have
  \begin{equation*}
  {\ltimes}_{i=1}^{r}(M_i  A_i)=( {\otimes}_{i=1}^{r}M_i) ({\ltimes}_{i=1}^{r} A_i).\label{ju_xiang}
  \end{equation*}
  \item[(vi)] For $M \in \mathbb{R}_{a \times b}$, we have
  \begin{equation*}
  W_{[a,c]} \ltimes M \ltimes W_{[c,b]}=I_c \otimes M.
  \end{equation*}
  Therefore we get $W^{T}_{[a,b]}=W_{[b,a]}$.
\end{enumerate}
\end{lemma}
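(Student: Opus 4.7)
The plan is to prove all six identities by unwinding \Cref{def1} and applying the mixed-product rule for Kronecker products, $(A\otimes B)(C\otimes D)=(AC)\otimes(BD)$, combined with direct evaluations on basis vectors $\delta_r^l$ whenever swap, order-reducing, or index-collapsing matrices appear. No single new idea is needed for any part; the work consists of reducing each identity to a canonical form and then matching indices.

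For (i), I would apply \Cref{def1} to each side and reduce both to $Q\otimes C$: on the left, $\mathrm{lcm}(1,a)=a$ gives $Q\ltimes C=(Q\otimes I_a)C$; on the right, $\mathrm{lcm}(qb,q)=qb$ gives $(I_q\otimes C)(Q\otimes I_b)$, and the mixed-product rule collapses both expressions into the same tensor. Part (ii) is cleanest by bilinearity: test on basis vectors $H=\delta_h^i$, $Q=\delta_q^j$, so that $Q\ltimes H=Q\otimes H=\delta_{hq}^{i+(j-1)h}$ and $W_{[h,q]}\ltimes(H\otimes Q)$ also equals $\delta_{hq}^{i+(j-1)h}$ by the explicit column description of $W_{[h,q]}$. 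Part (vi) is handled in the same spirit: compute the action of $W_{[a,c]}\,M\,W_{[c,b]}$ on an elementary tensor $e_i\otimes e_j$ and watch the two swaps reassemble the result as $I_c\otimes M$; the identity $W_{[a,b]}^{T}=W_{[b,a]}$ then drops out immediately from the column description.

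Parts (iii), (iv), and (v) exploit the indicator-vector structure of $\Delta_r$. For (iii), writing $Q=\delta_q^j$ gives $Q\ltimes Q=Q\otimes Q=\delta_{q^2}^{(j-1)q+j}$, which matches $\operatorname{Col}_j(O_q)$ directly from the definition of $O_q$. For (iv), with $H=\delta_h^i$, $Q=\delta_q^j$, $X=\delta_x^k$, the product $H\ltimes Q\ltimes X$ is $\delta_{hqx}^{(i-1)qx+(j-1)x+k}$, and $\mathbf{1}_h^T\otimes I_q\otimes \mathbf{1}_x^T$ sums over the $i$- and $k$-blocks while isolating the $j$-th coordinate, returning $Q$. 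For (v), I would induct on $r$: the base case $r=2$ is a single application of the mixed-product rule after reshaping via (i), and the inductive step slides one further factor $M_{r+1}A_{r+1}$ past the accumulated Kronecker product by the same maneuver.

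The only real obstacle is bookkeeping: keeping straight which dimension each identity matrix inflates, and making sure every $\mathrm{lcm}$ in \Cref{def1} is computed against the correct pair of inner dimensions. This is the classical pitfall of STP manipulations; organizing each proof so that the mixed-product rule is invoked verbatim rather than rederived in situ contains the risk. Beyond that, none of the six identities requires a genuinely new argument once the right basis-vector or Kronecker-product reformulation is in view.
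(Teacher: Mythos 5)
Your proof sketch is correct: every one of the six identities does reduce, exactly as you say, to unwinding Definition~\ref{def1}, applying the Kronecker mixed-product rule, and checking the swap/order-reducing/collapsing matrices on basis vectors $\delta_r^l$ (with bilinearity extending (ii) and (vi) to general vectors). The paper itself states Lemma~\ref{lem1} without proof, citing \cite{cheng2011}, so there is no in-paper argument to compare against; your verification matches the standard one in the STP literature.
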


\begin{lemma}\cite{cheng2011}\label{lem2} For a pseudo-logical function $g: \prod_{i=1}^n \mathcal{D}_{k_i}\\  \rightarrow \mathbb{R}$, identify $\mathcal{D}_{k_i} \sim \Delta_{k_i}$, then $g$'s algebraic form can be expressed as
\begin{equation}\label{alg}
g(x_{1}, \cdots, x_{n})= L_{g} \ltimes_{i=1}^{n} x_{i} \in \mathbb{R},
\end{equation}
where $L_{g} \in \mathcal{L} _{1 \times \prod_{i=1}^n k_i}$ is the structure matrix of $g$, $ x_{i} \in \Delta_{k_i}, i=1, \cdots, n.$
\end{lemma}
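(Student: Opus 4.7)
\textbf{Proof proposal for Lemma~\ref{lem2}.} The plan is to exhibit $L_g$ explicitly as the row vector tabulating all values of $g$ in a canonical order, and then verify that semi-tensor-multiplying it with the product of the vectorized inputs reads off the correct entry. The statement is essentially a finite-domain interpolation result, so the proof reduces to (a) showing that the STP of standard-basis vectors is again a standard-basis vector with a predictable index, and (b) defining $L_g$ so that its columns match the tabulated values under that indexing.

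First, under the identification $j \sim \delta_{k_i}^{j}$ for $j \in \mathcal{D}_{k_i}$, each argument $x_i$ lies in $\Delta_{k_i}$. I would prove by induction on $n$ that
\begin{equation*}
\ltimes_{i=1}^{n} \delta_{k_i}^{j_i} \;=\; \delta_{K}^{\,m(j_1,\ldots,j_n)}, \qquad K=\prod_{i=1}^{n} k_i,
\end{equation*}
where $m(j_1,\ldots,j_n) = \sum_{i=1}^{n}\bigl(j_i-1\bigr)\prod_{\ell=i+1}^{n} k_\ell + 1$ is the mixed-radix (lexicographic) index. The base case is immediate; the inductive step uses Definition~\ref{def1}: because the dimensions of the Kronecker factors already match (the right-hand factor $\delta_{k_{n+1}}^{j_{n+1}}$ has one column, and the accumulated left factor has $K$ rows with $b=K$, $c=K\cdot k_{n+1}$ etc.\ collapsing cleanly), the STP reduces to the ordinary Kronecker product of the two basis vectors, which is exactly the basis vector $\delta_{K k_{n+1}}^{m'}$ with $m'$ given by the standard Kronecker indexing formula.

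Next, define $L_g \in \mathbb{R}_{1\times K}$ by
\begin{equation*}
\operatorname{Col}_{m(j_1,\ldots,j_n)}(L_g) \;=\; g(j_1,\ldots,j_n)
\end{equation*}
for every $(j_1,\ldots,j_n) \in \prod_{i=1}^{n}\mathcal{D}_{k_i}$. This is well-defined because $m$ is a bijection between $\prod_{i=1}^{n}\mathcal{D}_{k_i}$ and $\{1,\ldots,K\}$. Combining with the previous step,
\begin{equation*}
L_g \ltimes_{i=1}^{n} \delta_{k_i}^{j_i} \;=\; L_g\,\delta_{K}^{m(j_1,\ldots,j_n)} \;=\; \operatorname{Col}_{m(j_1,\ldots,j_n)}(L_g) \;=\; g(j_1,\ldots,j_n),
\end{equation*}
which establishes~\eqref{alg} pointwise on the (finite) domain; since these points exhaust the domain, the identity holds identically. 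Uniqueness of $L_g$, if desired, follows by plugging in the $K$ distinct basis tuples and observing that each column of $L_g$ is forced.

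The only mildly tricky step is (a): one has to be careful that the nested STP $\ltimes_{i=1}^{n}\delta_{k_i}^{j_i}$ truly collapses to a Kronecker product of basis vectors at each stage, because Definition~\ref{def1} introduces an $I_{x/b} \otimes \cdots$ padding whenever $b\ne c$. Here the right factor at each step is a column vector ($c=1$ case, or more generally a vector of compatible height after induction), so $x=\mathrm{lcm}(b,c)=b$ and the padding is trivial, which is precisely why the Kronecker-product indexing formula yields the clean mixed-radix expression for $m$. Everything else is bookkeeping.
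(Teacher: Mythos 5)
Your proof is correct. There is nothing in the paper to compare it against: Lemma~\ref{lem2} is quoted from \cite{cheng2011} without proof, and your argument is the standard one for this fact --- show that $\ltimes_{i=1}^{n}\delta_{k_i}^{j_i}$ collapses to the Kronecker product of the basis vectors and hence to the single basis vector $\delta_{K}^{m(j_1,\ldots,j_n)}$ with the lexicographic index, then define $L_g$ as the row vector tabulating the values of $g$ in that order; this is also exactly how the paper later builds $P$ and $\mathbb{C}_i$ in \Cref{first}. Two small corrections. First, in your closing paragraph the roles of $b$ and $c$ from \Cref{def1} are swapped: when the accumulated left factor $A$ is a column vector, $b=1$ is its \emph{column} count and $c=k_{n+1}$ is the \emph{row} count of the right factor, so $x=\mathrm{lcm}(b,c)=c$; the trivial padding $I_{x/c}=I_1$ falls on the right factor while the left factor becomes $A\otimes I_c$, and $(A\otimes I_c)B=A\otimes B$. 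The conclusion (STP of basis column vectors equals their Kronecker product) is unaffected, but the bookkeeping as written does not parse against \Cref{def1}. Second, your $L_g$ is a general real row vector, which is what a real-valued pseudo-logical function requires; the lemma's claim that $L_g\in\mathcal{L}_{1\times\prod_{i=1}^{n}k_i}$ (a logical matrix) is evidently a typo for $\mathbb{R}_{1\times\prod_{i=1}^{n}k_i}$, consistent with the paper's later $\mathbb{C}_i\in\mathbb{R}_{1\times(eg)^m}$, so you proved the intended statement.
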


A straight corollary of \Cref{lem2} is as follows.

\begin{corollary}\label{cor1} For pseudo-logical functions $f,g: \prod_{i=1}^n \Delta_{k_i}  \rightarrow \mathbb{R}$ whose structure vectors are $L_{f}$ and $L_{g}$ respectively, we have
\begin{equation}
\sum_{x \in \Delta_{k}} f(x)g(x)=L_{f} L_{g}^{T},
\end{equation}
where $x=\ltimes_{i=1}^{n} x_{i}$, $k=\prod_{i=1}^n k_i.$
\end{corollary}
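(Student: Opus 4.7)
The plan is to apply \Cref{lem2} to both $f$ and $g$ and then recognize the resulting expression as the matrix product $L_f L_g^T$ of the two structure vectors. Concretely, the corollary should drop out almost immediately from the algebraic form (\ref{alg}) together with the observation that $\ltimes_{i=1}^n x_i$ bijects $\prod_i \Delta_{k_i}$ onto $\Delta_k$, so the sum over $x\in\Delta_k$ really is a sum over all input tuples of $f$ and $g$.

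First I would invoke \Cref{lem2}: setting $x=\ltimes_{i=1}^n x_i$, the algebraic forms give
\begin{equation*}
f(x_1,\ldots,x_n)=L_f\,x,\qquad g(x_1,\ldots,x_n)=L_g\,x,
\end{equation*}
where $L_f,L_g\in\mathcal{L}_{1\times k}$ are viewed as row vectors of length $k=\prod_{i=1}^n k_i$ and $x\in\Delta_k$. Next I would justify that the map $(x_1,\ldots,x_n)\mapsto \ltimes_{i=1}^n x_i$ is a bijection from $\prod_{i=1}^n\Delta_{k_i}$ onto $\Delta_k$; this follows inductively from the identity $\delta_a^i\ltimes\delta_b^j=\delta_{ab}^{(i-1)b+j}$, which is a direct consequence of \Cref{def1}. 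Consequently, as $(x_1,\ldots,x_n)$ ranges over $\prod_i\Delta_{k_i}$, the vector $x$ sweeps out $\Delta_k$ exactly once, and the sum in the statement is indexed consistently with the pointwise definitions of $f$ and $g$.

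Finally, for each $x=\delta_k^j$ the scalar $L_f x$ picks out the $j$-th entry $(L_f)_j$ and $L_g x$ picks out $(L_g)_j$, so summing over all $k$ basis vectors yields $\sum_{j=1}^{k}(L_f)_j(L_g)_j$, which is by definition $L_f L_g^T$. There is no real obstacle: once \Cref{lem2} recasts $f$ and $g$ into their algebraic row-vector form, the identity is essentially a one-line consequence of matrix multiplication. The only piece of bookkeeping worth being explicit about is the bijection above, so that the sum over $\Delta_k$ in the claim and the sum over $\prod_i\Delta_{k_i}$ implicit in the pointwise definitions of $f$ and $g$ are genuinely the same sum.
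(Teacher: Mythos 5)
Your argument is correct and matches the paper's treatment: the paper presents this as ``a straight corollary of \Cref{lem2}'' with no written proof, and your filling-in (algebraic forms $f(x)=L_f x$, $g(x)=L_g x$, the bijection of $\prod_i\Delta_{k_i}$ onto $\Delta_k$ under $\ltimes$, and $\sum_j (L_f)_j(L_g)_j = L_f L_g^T$) is exactly the intended one-line verification. No gaps.
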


\subsection{Bayesian games\label{Bayesian}}

Games can be divided into two categories based on whether they are with complete information or not. In a normal-form game, which is a commonly used model of games with complete information, each player completely knows all key information of other players. However, in a Bayesian game, which is a typical game with incomplete information, at least one player can't know certain key information of other players, such as payoff functions.

\begin{definition}\cite{Gibbons} \rmfamily\label{def2}
A normal-form game is denoted by a tuple $\mathcal{G}=( G,\mathcal{A},c)$, where
\begin{itemize}[leftmargin=*, topsep=0pt, itemsep=0pt, parsep=0pt]
  \item $G=\{1, \cdots, m\}$ indicates a player set.
  \item $\mathcal{A}={\prod}_{i=1}^m \mathcal{A}_i$ is an action \textit{profile} set, where $\mathcal{A}_i$ stands for the action set of player $i$.
  \item $c=\{c_1,\cdots,c_m\}$ is a payoff function set, where $c_i: \mathcal{A} \rightarrow \mathbb{R}$ denotes player $i$'s payoff function.
\end{itemize}
\end{definition}

\begin{definition}\cite{Gibbons} \rmfamily\label{def3}
A Bayesian game is denoted by a tuple $\mathcal{G}=( G,\mathcal{T},\mathcal{A},p,c)$, where
\begin{itemize}[leftmargin=*, topsep=0pt, itemsep=0pt, parsep=0pt, partopsep=0pt]
  \item $G=\{1, \cdots, m\}$ represents a player set.
  \item $\mathcal{T}={\prod}_{i=1}^m \mathcal{T}_i$ is termed as a type \textit{profile} set, where $\mathcal{T}_i$ denotes the type set of player $i$.
  \item $\mathcal{A}={\prod}_{i=1}^m \mathcal{A}_i$ expresses an action \textit{profile} set, where $\mathcal{A}_i$ denotes the action set of player $i$.
  \item \begin{equation}
  p\left(t_{-i} \mid t_i\right)=\frac{p\left(t_{-i},t_i\right)}{p\left(t_i\right)}=\frac{p\left(t_{-i}, t_i\right)}{\sum_{t_{-i}^{\prime} \in \mathcal{T}_{-i}} p\left(t_{-i}^{\prime}, t_i\right)}
  \end{equation}
  stands for a belief, where $\mathcal{T}_{-i}={\prod}_{l \in G, l\neq i} \mathcal{T}_{l}$. $p: \mathcal{T} \rightarrow[0,1]$ denotes a probability distribution over $\mathcal{T}$, which is a common knowledge. And $t_i$ is player $i$'s private knowledge.
  \item $c=\{c_1,\cdots,c_m\}$ indicates a payoff function set, where $c_i:\mathcal{T} \times \mathcal{A} \rightarrow \mathbb{R}$ denotes player $i$'s payoff function.
\end{itemize}
\end{definition}

In a Bayesian game, each player's payoff function is determined by his type and is no longer fixed. Each player can't know others' types, but can know the probability distribution for all players' types, so one can make a probability estimation of other players’ types, which is regarded as his belief. Below, we will give a simple example detailing how to model auction problems using Bayesian games to help understand the additional elements type and belief in a Bayesian game compared to a normal game.

\begin{example}\label{e1}
An auction whose auction rule is the first-price sealed-bid auction \cite{Elyakime1994} can be modeled as a Bayesian game and represented by a tuple $\mathcal{G}=( G,\mathcal{T},\mathcal{A},p,c)$, where
\begin{itemize}[leftmargin=*, topsep=0pt, itemsep=0pt, parsep=0pt]
  \item $i \in G=\{1, \cdots, 3\}$ represents a bidder.
  \item $\mathcal{T}_1=\{t_{11}=100,t_{12}=110\}$ denotes bidder $1$'s all possible evaluations of the item, that is to say, the auction item is worth $100$ or $110$ for bidder $1$. Similarly, the type sets of the other two bidders are $\mathcal{T}_2=\{t_{21}=108,t_{22}=93\}$ and $\mathcal{T}_3=\{t_{31}=78,t_{32}=95\}$. The real evaluation \textit{profile} is $t=(t_1, t_2, t_3) \in \mathcal{T}={\prod}_{i=1}^3 \mathcal{T}_i$.
  \item $\mathcal{A}_1=\{a_{11}=57,a_{12}=68\}$ denotes bidder $1$'s all possible bids. Similarly, the action sets of the other two bidders are $\mathcal{A}_2=\{a_{21}=70,a_{22}=90\}$ and $\mathcal{A}_3=\{a_{31}=30,a_{32}=80\}$. The real bid \textit{profile} is $a=(a_1, a_2, a_3) \in \mathcal{A}={\prod}_{i=1}^3 \mathcal{A}_i$.
  \item $p(t_1=100,t_2=108,t_3=78)=0.125$ indicates that the probability of the evaluation \textit{profile} $t=(100, 108, 78)$ is $0.125$. The probability distribution of all bidders’ evaluations $p$ is a common knowledge. Thus, the belief
  \begin{align*}
  &p\left(t_2=108, t_3=78 \mid t_1=100\right)\\=&\frac{p\left( t_1=100, t_2=108, t_3=78 \right)}{p\left(t_1=100\right)}
  \end{align*}
  represents bidder $1$'s probability estimation of other bidders' evaluations $(t_2=108, t_3=78)$ when bidder $1$'s evaluation is $t_1=100$. Similarly, other beliefs can be obtained.
  \item $c_i=t_i-a_i$ indicates the payoff function of bidder $i$, that is, the difference between his evaluation and bid is his payoff.
\end{itemize}
\end{example}

Each bidder doesn’t know other bidders’ evaluations of the item. A bidder’s payoff is the difference between his bid and evaluation, so it is closely tied to the evaluation. When modeling, we treat each player’s evaluation as his type which determines his payoff function. If players knew each other’s types, they would know each other’s payoff functions, and the Bayesian game would simplify into a normal-form game. This provides an understanding of the type in \Cref{def3}.

Bidders usually possess common knowledge. So, while they cannot know each other’s exact evaluations, they possess the probability distribution of all bidders’ evaluations. Based on this common knowledge and his own evaluation of the item, each bidder can infer the probability distribution of other bidders' evaluations, which is regarded as his belief. Therefore, a player's belief in \Cref{def3} represents his probability estimation of other players’ types.

In existing definitions of Bayesian games, the action set of each player may be not fixed, but determined by his type, like the payoff function. We combine each player's action sets under different types together to get a fixed action set, so that \Cref{def3} is the same as existing definitions and the notation for the action set is simpler.

\begin{remark}
In this paper, we use \textit{profile} to represent all players' information. For example, each player's action is $a_i \in \mathcal{A}_i$, then the action \textit{profile} is $a=\prod_{i=1}^m a_i \in \mathcal{A}=\prod_{i=1}^m \mathcal{A}_i$.
\end{remark}

\section{Multi-group Bayesian games\label{MBG}}

In this section, we will present the formal definition of multi-group Bayesian games (MBGs) along with its two types of group behaviors: cooperative and non-cooperative behaviors in groups. To characterize equilibria and potentiality under these group behaviors, we subsequently give the following concepts: MBNE and MBPG for cooperative group behavior, strongly MBNE and strongly MBPG for non-cooperative group behavior.

\subsection{The definition of MBGs}

In order to describe the group behavior among players in a Bayesian game, we will give a detailed definition of multi-group Bayesian games in the following. In MBGs, players in the same group play a game with complete information, and players in different groups play a game with incomplete information.

Split players of a Bayesian game into $r$ groups as $G=\mathop{\cup}\limits_{l=1}^{r} G_l$, $G_i \bigcap G_j=\emptyset$, $\forall i\neq j$, $i,j\in \mathcal{D}_{r}$, where $G_l$ denotes the set of players in the $l$-th group. Define $\mid G_l \mid=m_l$. Without loss of generality, denote $G_l=\{\overline{m}_{l-1}+1,\cdots,\overline{m}_{l}\}$, $\overline{m}_{l}={\sum}_{i=1}^{l} m_i$, $l=1,\cdots,r$, $\overline{m}_0=1$.

Define the type set of players in the $l$-th group as $\mathcal{T}_{G_l}={\prod}_{j \in G_l} \mathcal{T}_j$. Then we have $\mathcal{T}={\prod}_{l=1}^r \mathcal{T}_{G_l}$. Define the action set of players in the $l$-th group as $\mathcal{A}_{G_l}={\prod}_{j \in G_l} \mathcal{A}_j $. Then we have $\mathcal{A}={\prod}_{l=1}^r \mathcal{A}_{G_l}$. Based on the above analysis, we give the following definition.

\begin{definition}\label{def4} A multi-group Bayesian game (MBG) is denoted by a tuple $\mathcal{G}=( G,\mathcal{T},\mathcal{A},p,C)$, where
\begin{itemize}[leftmargin=*, topsep=0pt, itemsep=0pt, parsep=0pt]
  \item $G = \cup_{l=1}^{r} G_l$ denotes a player set.
  \item $\mathcal{T}={\prod}_{l=1}^r \mathcal{T}_{G_l}$ indicates a type \textit{profile} set, where $\mathcal{T}_{G_l}={\prod}_{j \in G_l} \mathcal{T}_j$ denotes the type set of players in the $l$-th group.
  \item $\mathcal{A}={\prod}_{l=1}^r \mathcal{A}_{G_l}$ stands for an action \textit{profile} set, where $\mathcal{A}_{G_l}={\prod}_{j \in G_l} \mathcal{A}_j $ denotes the action set of players in the $l$-th group.
  \item \begin{equation}\label{p_T}
  p\left(T_{-l} \mid T_{l}\right)=\frac{p\left(T_{-l},T_{l}\right)}{p\left(T_{l}\right)}=\frac{p\left(T_{-l}, T_{l}\right)}{ \mathop{\sum}\limits_{T_{-l}^{\prime} \in \mathcal{T}_{-G_l}}  p\left(T_{-l}^{\prime}, T_{l}\right)}
  \end{equation}
  represents a belief, where $T_{-l} \in \mathcal{T}_{-G_l}={\prod}_{i \in G_l,i \neq l} \mathcal{T}_{G_i}$, $T_{l} \in \mathcal{T}_{G_l}$. $p: \mathcal{T} \rightarrow[0,1]$ denotes a probability distribution over $\mathcal{T}$, which is a common knowledge. And $T_{l}$ is a private knowledge of players in the $l$-th group.
  \item $c=\{c_1,\cdots,c_m\}$ indicates a payoff function set, where $c_i:\mathcal{T} \times \mathcal{A} \rightarrow \mathbb{R}$ denotes player $i$'s payoff function.
\end{itemize}
\end{definition}

In a Bayesian game, each player adopts different payoff functions under different types, and his type is known only to himself. In the above definition, we treat each group as a whole to indicate that each player's type can only be accurately known by players within his group. So each player only needs to estimate the probability of the type of players in other groups, that is, $p\left(T_{-l} \mid T_{l}\right)$. This aligns closely with real-world scenarios. 

Taking the auction problem as an example, let's illustrate the group behavior involved. Due to the social attributes of the bidders, each bidder knows evaluations (i.e., types) of the bidders within his social circle for the item being auctioned. However, they cannot know the evaluations of bidders outside his social circle. Below we give an example which is based on \Cref{e1} to intuitively show the difference between multi-group Bayesian games and Bayesian games.

\begin{example}\label{e2}
An auction whose auction rule is the first price
sealed bid auction \cite{Elyakime1994} can be modeled as a multi-group Bayesian game and represented by a tuple $\mathcal{G}=( G,\mathcal{T},\mathcal{A},p,c)$, where
\begin{itemize}[leftmargin=*, topsep=0pt, itemsep=0pt, parsep=0pt]
  \item $G_1=\{1\}$ represents group $1$, $G_2=\{2,3\}$ represents group $2$.
  \item $\mathcal{T}_{G_1}=\mathcal{T}_1=\{T_{11}=t_{11},T_{12}=t_{12}\}$ denotes group $1$'s all possible evaluations of the item. Similarly, the type set of group $2$ is $\mathcal{T}_{G_2}={\prod}_{j \in G_2} \mathcal{T}_j=\{T_{21}=(t_{21},t_{31}),T_{22}=(t_{21},t_{32}),T_{23}=(t_{22},t_{31}),T_{24}=(t_{22},t_{32})\}$. The real evaluation \textit{profile} is $T=t=(t_1, t_2, t_3) \in \mathcal{T}={\prod}_{l=1}^2 \mathcal{T}_{G_l}$.
  \item $\mathcal{A}_{G_1}=\mathcal{A}_1=\{A_{11}=a_{11},A_{12}=a_{12}\}$ denotes group $1$'s all possible bids. Similarly, the action set of group $2$ is $\mathcal{A}_{G_2}={\prod}_{j \in G_2} \mathcal{A}_j=\{A_{21}=(a_{21},a_{31}),A_{22}=(a_{21},a_{32}),A_{23}=(a_{22},a_{31}),A_{24}=(a_{22},a_{32})\}$. The real bid \textit{profile} is $A=a=(a_1, a_2, a_3) \in \mathcal{A}={\prod}_{l=1}^2 \mathcal{A}_{G_l}$.
  \item The probability distribution of all bidders’ evaluations $p$ is a common knowledge. $p(t_{11},t_{21},t_{31})=0.125$, $p(t_{11},t_{21},t_{32})\\=0.05$, $p(t_{11},t_{22},t_{31})=0.03$, $p(t_{11},t_{22},t_{32})=0.125$, $p(\\t_{12},t_{21},t_{31})=0.2$, $p(t_{12},t_{21},t_{32})=0.2$, $p(t_{12},t_{22},t_{31})=0.25$, $p(t_{12},t_{22},t_{32})=0.2$. Thus, the belief
  \begin{align*}
  &p\left(T_{11}=100 \mid T_{21}=(t_{21}=108,t_{31}=78)\right)\\=&\frac{p\left(T_{11}=100, T_{21}=(t_{21}=108,t_{31}=78) \right)}{p(T_{21}=(t_{21}=108,t_{31}=78))}
  \end{align*}
  represents group $2$'s probability estimation of group $1$'s evaluations $T_{11}=100$ when group $2$'s evaluation is $T_{21}=(t_{21}=108,t_{31}=78)$. Similarly, other beliefs can be obtained.
  \item $c_i=t_i-a_i$ indicates the payoff function of player $i$.
\end{itemize}
\end{example}

In a MBG, each group selects action based on its type. Therefore, each group's strategy is a mapping from the type \textit{profile} set to the action \textit{profile} set. Here we give the exact definition of each group's strategy in a MBG.

\begin{definition}\label{def5}
In a MBG $\mathcal{G}=( G,\mathcal{T},\mathcal{A},p,C)$, a strategy of the $l$-th group is denoted by $s_{l}: \mathcal{T}_{G_l} \rightarrow \mathcal{A}_{G_l}$. The strategy \textit{profile} set is denoted by $\mathcal{S}=\mathop{\cup}\limits_{l=1}^{r} \mathcal{S}_{G_l}$, where $\mathcal{S}_{G_l}=(\mathcal{A}_{G_l})^{\mathcal{T}_{G_l}}$ denotes the strategy set of the $l$-th group.
\end{definition}

Taking the auction problem as an example, this definition states that each bidder's decision is to determine a bid function $s_{l}$ based on his evaluation $T_l \in \mathcal{T}_{G_l}$, rather than selecting a constant bid unrelated to his evaluation. It should be noted that in the above definition, in order to describe that there is complete information among players within a group, the information of all bidders in the $l$-th group is set as a whole, but in fact, their information just be put together, and each bidder still can choose his strategy independently.

\subsection{Definitions of (strongly) MBNE and (strongly) MBPG}

In actual scenarios, the game with complete information among players within a group may be a cooperative game or a noncooperative game. Still taking the auction problem as an example, there may be two types of relationships between bidders within a social circle. The first one: all bidders within a social circle cooperate with each other. They choose strategies together and share their payoffs equally to reduce risks. The second one: although all bidders  within a social circle are familiar with each other, they do not cooperate with each other. Each bidder still independently chooses his strategy to maximize his own payoff. 

We use \Cref{figbid} to illustrate the difference between Bayesian games and these two situations in multi-group Bayesian games based on \Cref{e1,e2}. And MBGs are applied to a routing problem to further help readers understand the difference between them in \Cref{e3}. 

\begin{figure}[!htb]
\centering
\includegraphics[width=3in]{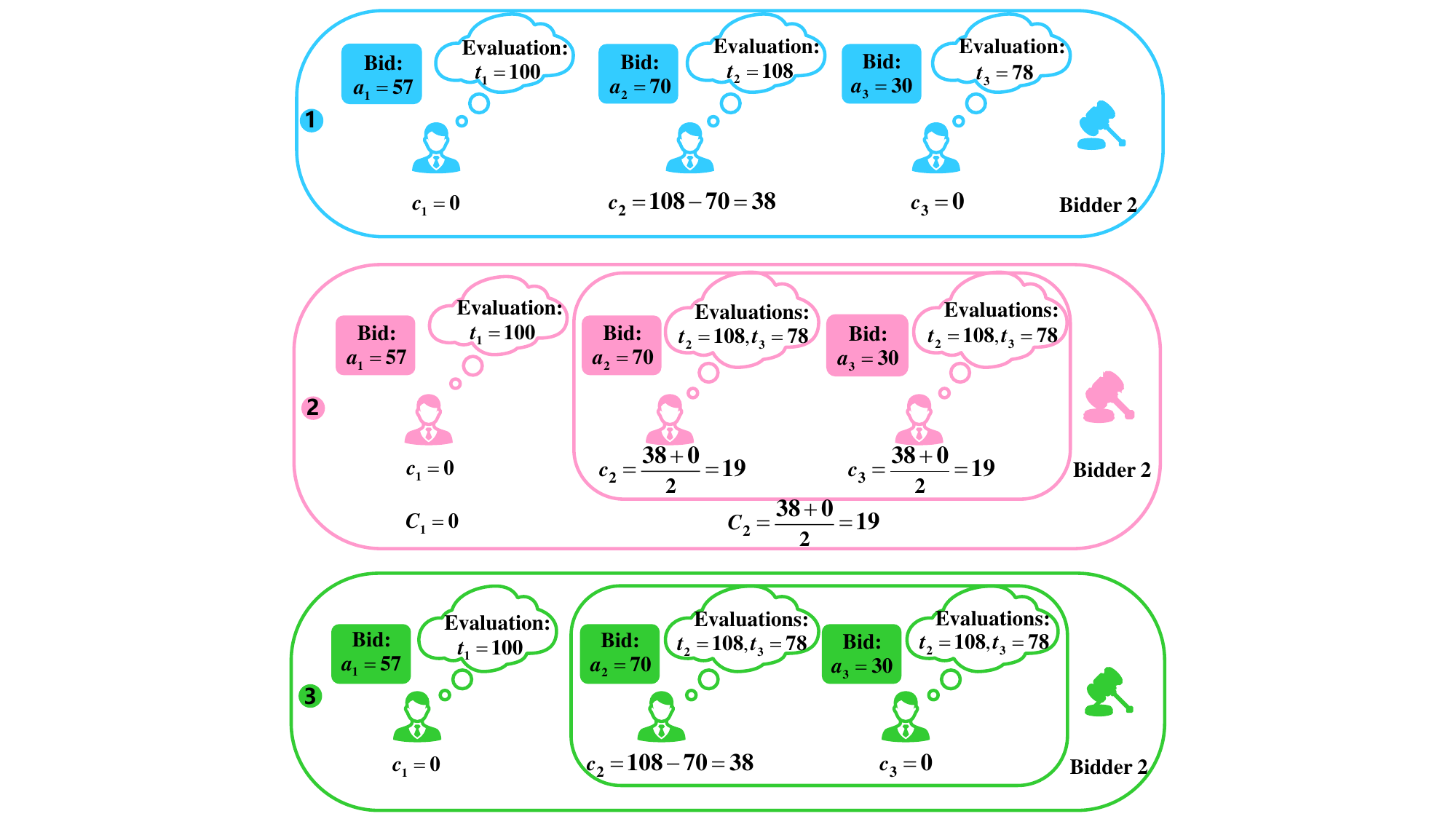}
\caption{Compare the differences between \textcircled{1} a Bayesian game, \textcircled{2} a multi-group Bayesian game where players in a group play a cooperative game, and \textcircled{3} a multi-group Bayesian game where players in a group play a noncooperative game by using an auction problem.}
\label{figbid}
\end{figure}

For the situation where players within a group play a cooperative game with complete information, we define group $l$'s payoff function as $C_{l}\left(T ; A\right)=\frac{\sum_{i \in G_l} c_i\left(T; A\right)}{m_l}$ to represent that players within a group will share their payoffs equally. 
And the goal of group $l$ is to maximize $C_{l}\left(T ; A\right)$. However, since the types of other groups are not precisely known, the goal shifts to maximizing the expected average payoff across all possible types of other groups. Consequently, we introduce the following definitions for MBNE and MBPG.

For the situation where players within a group play a noncooperative game with complete information, we give the following definitions for strongly MBNE and strongly MBPG.

We put the definitions under these two situations together below so that readers can intuitively compare their differences.

\begin{definition}\label{MBNE}
In a MBG $\mathcal{G}=( G,\mathcal{T},\mathcal{A},p,C)$, if a strategy \textit{profile} $ s^{\ast}=( s_{1}^{\ast},\cdots, s_{r}^{\ast}) \in \mathcal{S}$ satisfies that for every group $l \in \mathcal{D}_r$ and any $T_{l} \in \mathcal{T}_{G_l}$,
\begin{enumerate}[leftmargin=*, topsep=0pt, itemsep=0pt, parsep=0pt]
  \item[(i)] $s_{l}^{\ast}(T_{l})$ solves the following maximization problem
$$
\max_{A_{l} \in \mathcal{A}_{G_l}} \sum_{T_{-l} \in \mathcal{T}_{-G_l}} p\left(T_{-l} \mid T_{l}\right)
C_{l}\left(T_{-l}, T_{l} ; s_{-l}^*\left(T_{-l}\right), A_{l}\right),
$$
then $ s^{\ast}$ is called a multi-group Bayesian Nash equilibrium (MBNE);
    \item[(ii)] $s_{l}^{\ast}(T_{l})$ solves the following maximization problem
{\small 
\begin{align}
\max_{A_{l} \in \mathcal{A}_{G_l}} \sum_{T_{-l} \in \mathcal{T}_{-G_l}} &p\left(T_{-l} \mid T_{l}\right)\nonumber \\
&c_{i}\left(T_{-l}, T_{l} ; s_{-l}^*\left(T_{-l}\right), A_{l}\right),\forall i \in G_l
\end{align}
}
then $ s^{\ast}$ is called a strongly multi-group Bayesian Nash equilibrium (strongly MBNE).
\end{enumerate}
\end{definition}

\begin{definition}\label{MBPG}
In a MBG $\mathcal{G}=( G,\mathcal{T},\mathcal{A},p,C)$, if for any $T \in \mathcal{T}$,$A_{l},A_{l}^{\prime} \in \mathcal{A}_{G_l}$, $A_{-l} \in \mathcal{A}_{-G_l}$, $l \in \mathcal{D}_r$, there exists a function $F: \mathcal{T} \times \mathcal{A} \rightarrow \mathbb{R}$ satisfies
  \begin{align}\label{CF}
&C_{l}\left(T ; A_{-l},A_{l}\right)-C_{l}\left(T ;  A_{-l},A_{l}^{\prime}\right)\nonumber \\
=&F\left(T ;  A_{-l},A_{l}\right)-F\left(T ;  A_{-l},A_{l}^{\prime}\right),
\end{align}
then $\mathcal{G}$ is a multi-group Bayesian potential game (MBPG) and $F$ is called a potential function of $\mathcal{G}$. Besides, if (\ref{CF}) is strengthened to 
\begin{align}
&c_{i}\left(T ; A_{-l},A_{l}\right)-c_{i}\left(T ;  A_{-l},A_{l}^{\prime}\right)\nonumber \\
=&F\left(T ;  A_{-l},A_{l}\right)-F\left(T ;  A_{-l},A_{l}^{\prime}\right),\forall i \in G_l,
\end{align}
then $\mathcal{G}$ is a strongly MBPG and $F$ is called a strongly potential function of $\mathcal{G}$.
\end{definition}

Based on the above definitions and knowledge of Bayesian games, we can know that if a MBG is (strongly) potential, then the MBG has (strongly) MBNE. This can be easily extended from the works of \cite{Rosenthal,li2023modeling}. And all (strongly) MBNE are the strategy \textit{profiles} that maximize the (strongly) potential function $F$. 

In the following, MBGs are applied to a routing problem to further help readers understand the difference between Bayesian games and the two situations in multi-group Bayesian games.

\begin{example}\label{e3}
Consider a routing problem. In the traffic network shown in \Cref{lizi1}, players 1,2,3 plan to travel from origin $O$ to destination $D_1$ or $D_2$. Each player gets paid to reach his destination, but, of course, there are costs he needs to pay on his path. And each player's payoff is the difference between these. The type of each player refers to his destination. Therefore, each player’s type determines his payoff function. When a player's destination is $D_1$, his action set is $B_1=\{\beta_1,\beta_2\}$. When a player's destination is $D_2$, his action set is $B_2=\{\gamma_1,\gamma_2\}$. And $\beta_1=(q_1,q_2,q_3)$, $\beta_2=(q_6,q_5,q_3)$, $\gamma_1=(q_6,q_7,q_8)$, $\gamma_2=(q_1,q_4,q_8)$. It is easy to see that the traffic network is symmetrical, so there is a one-to-one correspondence between the actions in $B_1$ and $B_2$. Thus, $\mathcal{A}_i=B$, $i=1,2,3$, where $B=B_1=B_2$. If there is no group behavior among the three players, then the routing problem can be modeled as a Bayesian game. 

However, if there exists group behavior between the three players, then the routing problem can be modeled as a multi-group Bayesian game. Suppose that players 1 and 2 are in a group, then they know each other's destination. So the game between them, that is, the game within the group, is with complete information. Player 3 is not familiar with them and don't know their destinations, so the game between player 3 and players 1, 2 is with incomplete information. But the prior distribution of the destinations chosen by all players denoted by $p$ is public. 

There are two types of situations. The first one: all players within a group play a cooperative game with complete information, that is, they choose paths together and share their payoffs equally to reduce risks. The second one: all players within a group play a noncooperative game with complete information, that is, although all players within a group know each other's destination, each player still independently chooses his path to maximize his own payoff. 

\begin{figure}[htb]
\centering
\includegraphics[width=3in]{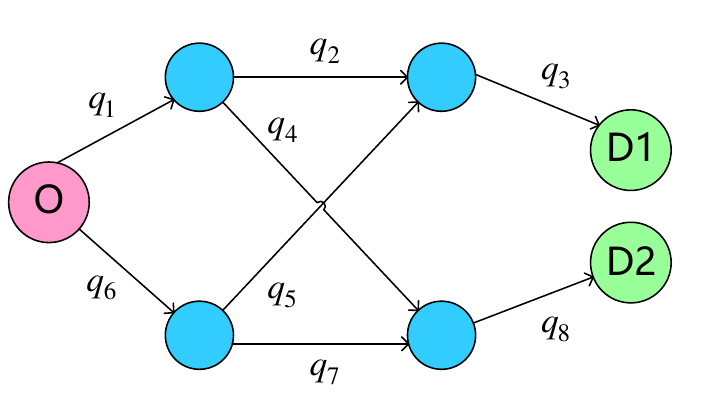}
\caption{The traffic network.}
\label{lizi1}
\end{figure}
\end{example}

\begin{remark}
For games with complete information, \cite{Li2019} put forward the concept of ``strongly'', indicating that the requirement is stronger, so we also use this concept in this paper to align with existing concepts in games with complete information.
\end{remark}

\subsection{Definition and properties of the multi-group ex-ante agent
transformation\label{MEAT}}

For finding (strongly) MBNE of a MBG, we will transform a MBG to its MEAG which is more tractable. Subsequently, we establish a bijective correspondence between (strongly) MBNE of the original MBG and (strongly) NE of the transformed MEAG, ensuring that a MBG's (strongly) MBNE can be obtained directly by solving its MEAG's (strongly) NE. Furthermore, we demonstrate that this transformation preserves (strongly) potentiality, so even if a MBG is not potential, as long as its MEAG is potential, its (strongly) MBNE can be obtained.

Firstly, we give some concepts about a MBG's multi-group ex-ante agent game (MEAG) and provide the relationship between a MBG and its MEAG in this section.

\begin{definition}\label{def8}
A multi-group ex-ante agent game (MEAG) of a MBG $\mathcal{G}=( G,\mathcal{T},\mathcal{A},p,C)$ is represented by a tuple $\widehat{\mathcal{G}}=\{\widehat{G}, \widehat{\mathcal{A}}, \widehat{C}\}$, where
\begin{itemize}[leftmargin=*, topsep=0pt, itemsep=0pt, parsep=0pt, partopsep=0pt]
  \item Agent set: $\widehat{G}=\{(l,T_{lk})\}_{l\in \mathcal{D}_{r},T_{lk}\in \mathcal{T}_{G_l}}$, where agent $(l,T_{lk})$ represents the type of group $l$ is $T_{lk}$, that is, the $k$-th type.
  \item Agent action \textit{profile} set: $\widehat{\mathcal{A}}={\prod}_{(l,T_{lk})\in \widehat{G}} \widehat{\mathcal{A}}_{(l,T_{lk})}$, where $\widehat{\mathcal{A}}_{(l,T_{lk})}= \mathcal{A}_{G_l}$ stands for the action set of agent $(l,T_{lk}) \in \widehat{G}$.
  \item Payoff function: $\widehat{c}=\{\widehat{c}_{(l,T_{lk})}\}_{(l,T_{lk}) \in \widehat{G}}$, where $\widehat{c}_{(l,T_{lk})}=\{\widehat{c}_{(l,T_{lk})}^i\}_{i \in G_l}$,
      \begin{align}\label{C_jian}
      \hat{c}_{(l,T_{lk})}^i(\widehat{A})=&\sum_{T_{-l} \in \mathcal{T}_{-{G_l}}} p(T_{-l}, T_{lk}) \nonumber \\
      & c_i(T_{-l}, T_{l k} ;[\Gamma^{-1}(\widehat{A})](T_{-l}, T_{lk}))
      \end{align}
      represents player $i$'s payoff function in agent $(l,T_{lk})$.
\end{itemize}
\end{definition}

In this definition, each type of each group $(l,T_{lk})$ is regarded as an agent, that is, a new player. And beliefs are incorporated into payoff functions. So the MEAG of a MBG no longer contains incomplete information and becomes a normal-form game.

To describe the one-to-one relationship between strategies in a MBG and agent action \textit{profiles} in the MBG's MEAG, define a bijective mapping $\Gamma: \mathcal{S} \rightarrow \widehat{\mathcal{A}}$ as
\begin{align}
&\Gamma(s)=\left(s_l\left(T_{l k}\right)\right)_{l \in \mathcal{D}_r, T_{l k} \in \mathcal{T}_{G_l}} \nonumber \\
=&(s_1(T _{11}), \ldots, s_1(T_{1 \mid \mathcal{T}_{G_1} \mid }), \ldots, s_r(T_{r 1}), \ldots, s_r(T_{r \mid \mathcal{T}_{G_r} \mid })),
\end{align}
for any $s=\left(s_1, \ldots, s_r\right) \in \mathcal{S}$. Then we can get that for any $ \widehat{A}=(\widehat{A}_{(l, T_{l k})})_{(l, T_{lk}) \in \widehat{G}} \in \widehat{\mathcal{A}} $, $\Gamma$'s inverse mapping $\Gamma^{-1}$ satisfies $ \Gamma^{-1}(\widehat{A}) \in \mathcal{S}$. So $[\Gamma^{-1}(\widehat{A})]_l (T_l)$$=\widehat{A}_{(l, T_l)} \in  \mathcal{A}_{G_l}$.

For any $\widehat{A}\in \widehat{\mathcal{A}}$, define agent $\left(l, T_{lk}\right) $'s payoff function as $ \widehat{C}_{(l,T_{lk})}(\widehat{A})=\frac{\sum_{i \in G_l}  \widehat{c}_{(l,T_{lk})}^i(\widehat{A})}{m_l}$ to represent that players within a group will share their payoffs equally. 

\begin{definition}\label{def9}
For a MEAG $\widehat{\mathcal{G}}=\{\widehat{G}, \widehat{\mathcal{A}}, \widehat{C}\}$, an agent action \textit{profile} $\widehat{A}^*=( \widehat{A} _{\left(l, T_{lk}\right)}^*, \widehat{A}_{-\left(l, T_{lk}\right)}^*)$ satisfies that:

\begin{enumerate}[leftmargin=*, topsep=0pt, itemsep=0pt, parsep=0pt]
  \item[(i)] $\widehat{A}^*$ is a Nash equilibrium if for every $\left(l, T_{lk}\right) \in \widehat{G}$, $\widehat{A} _{\left(l, T_{lk}\right)}^*$ solves the following maximization problem
$$
\max_{\widehat{A} _{\left(l, T_{lk}\right)} \in \widehat{\mathcal{A}}_{(l,T_{lk})}}\widehat{C}_{(l,T_{lk})}(\widehat{A} _{\left(l, T_{lk}\right)}, \widehat{A}_{-\left(l, T_{lk}\right)}^*).
$$
  \item[(ii)] $\widehat{A}^*$ is a strongly Nash equilibrium if for every $\left(l, T_{lk}\right) \in \widehat{G}$, $\widehat{A} _{\left(l, T_{lk}\right)}^*$ solves the following maximization problem
$$
\max_{\widehat{A} _{\left(l, T_{lk}\right)} \in \widehat{\mathcal{A}}_{(l,T_{lk})}}\widehat{c}_{(l,T_{lk})}^i(\widehat{A} _{\left(l, T_{lk}\right)}, \widehat{A}_{-\left(l, T_{lk}\right)}^*) ,\forall i \in G_l.
$$
\end{enumerate}
\end{definition}

Secondly, as in \Cref{fig1}, we give the following two theorems to show the proposed transformation's good properties.

\begin{figure}[!htb]
\centering
\includegraphics[width=3in]{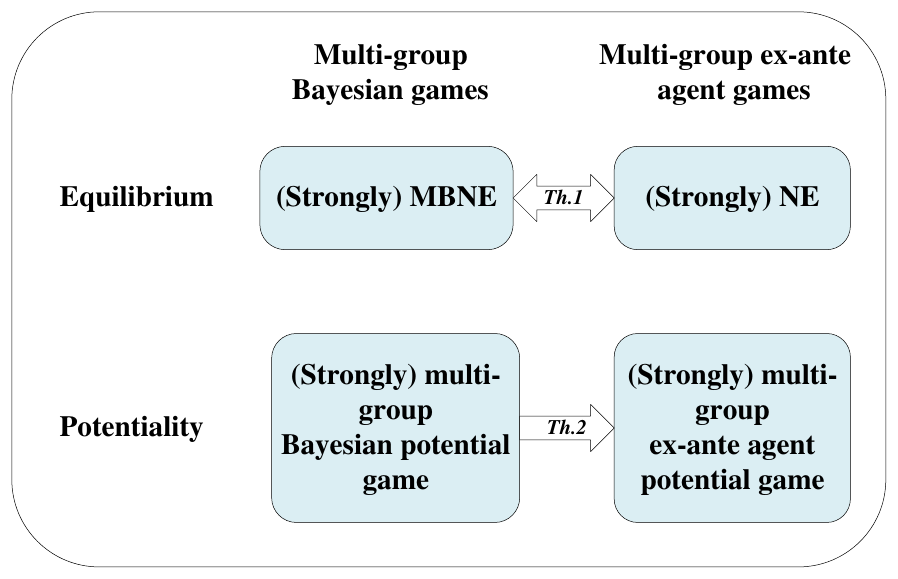}
\caption{The properties of multi-group ex-ante agent transformation.}
\label{fig1}
\end{figure}

\begin{theorem}\label{th:thm1}
For a MBG $\mathcal{G}=( G,\mathcal{T},\mathcal{A},p,C)$ and its MEAG $\widehat{\mathcal{G}}=\{\widehat{G}, \widehat{\mathcal{A}}, \widehat{C}\}$, $s^*$ is a (strongly) multi-group Bayesian Nash equilibrium of $\mathcal{G}$ if and only if $\widehat{A}^*=\Gamma(s^*)$ is a (strongly) Nash equilibrium of $\widehat{\mathcal{G}}$.
\end{theorem}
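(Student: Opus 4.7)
The plan is to establish the biconditional by deriving a clean identity that relates each agent's payoff in $\widehat{\mathcal{G}}$ to the conditional expected payoff appearing in \Cref{MBNE}, and then noting that unilateral deviations in the two games correspond bijectively under $\Gamma$. Since $\Gamma$ is a bijection, it suffices to prove the equivalence for a fixed $s\in\mathcal{S}$ with $\widehat{A}=\Gamma(s)$.

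First I would compute $\widehat{C}_{(l,T_{lk})}(\widehat{A})$ by plugging $[\Gamma^{-1}(\widehat{A})]_j(T_j)=\widehat{A}_{(j,T_j)}=s_j(T_j)$ into (\ref{C_jian}) and factoring $p(T_{-l},T_{lk})=p(T_{-l}\mid T_{lk})\, p(T_{lk})$, obtaining
$$
\widehat{C}_{(l,T_{lk})}(\widehat{A}) = p(T_{lk}) \sum_{T_{-l}\in\mathcal{T}_{-G_l}} p(T_{-l}\mid T_{lk})\, C_l\bigl(T_{-l},T_{lk};\, s_{-l}(T_{-l}), s_l(T_{lk})\bigr),
$$
together with the pointwise analogue for each $i\in G_l$ obtained by replacing $C_l$ and $\widehat{C}_{(l,T_{lk})}$ with $c_i$ and $\widehat{c}^{\,i}_{(l,T_{lk})}$ respectively.

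The second key point is structural: changing only the entry $\widehat{A}_{(l,T_{lk})}$ in $\widehat{\mathcal{G}}$ corresponds under $\Gamma^{-1}$ to modifying $s_l$ only at the single input $T_{lk}$, leaving $s_l(T_l)$ at all $T_l\neq T_{lk}$ and every $s_j$ with $j\neq l$ untouched. Hence a unilateral deviation of agent $(l,T_{lk})$ in $\widehat{\mathcal{G}}$ is precisely a single-type deviation of group $l$'s strategy at type $T_{lk}$ in $\mathcal{G}$, and $s_{-l}^*(T_{-l})$ in \Cref{MBNE} is literally the tuple of $j\neq l$ components of $\Gamma^{-1}(\widehat{A}^*)$ evaluated at the corresponding entries of $T_{-l}$.

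Combining the two observations, since $p(T_{lk})$ is a nonnegative constant with respect to the maximization over $\widehat{A}_{(l,T_{lk})}\in\mathcal{A}_{G_l}$, the maximization in \Cref{def9}(i) at agent $(l,T_{lk})$ is equivalent to the maximization in \Cref{MBNE}(i) at group $l$ and type $T_l=T_{lk}$; ranging over all $(l,T_{lk})\in\widehat{G}$ covers exactly all $l\in\mathcal{D}_r$ and all $T_l\in\mathcal{T}_{G_l}$, yielding the MBNE equivalence. The strongly case is identical, using the per-player identity for $\widehat{c}^{\,i}_{(l,T_{lk})}$ and $c_i$ and quantifying over $i\in G_l$. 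The only delicate point I anticipate is the degenerate case $p(T_{lk})=0$, where the belief $p(\cdot\mid T_{lk})$ in \Cref{MBNE} is ill-defined while the MEAG condition becomes vacuous; this is resolved either by the standard convention that all marginals in \Cref{def4} are positive, or by treating both sides as trivially satisfied at such types. Everything else is bookkeeping enabled by the bijectivity of $\Gamma$.
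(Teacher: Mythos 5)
Your proposal is correct and follows essentially the same route as the paper's proof: rewrite the agent payoff (\ref{C_jian}) via $p(T_{-l},T_{lk})=p(T_{-l}\mid T_{lk})\,p(T_{lk})$, observe that a unilateral deviation of agent $(l,T_{lk})$ in $\widehat{\mathcal{G}}$ corresponds exactly to a single-type deviation of group $l$ at $T_{lk}$ under $\Gamma$, and conclude that the two maximization problems coincide for each $(l,T_{lk})$. Your explicit handling of the degenerate case $p(T_{lk})=0$ is a detail the paper's proof silently skips, but otherwise the two arguments are the same.
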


\begin{proof}
We assume that $\widehat{A}^*$ is a Nash equilibrium of $\widehat{\mathcal{G}}$. By Definitions \ref{def8}, \ref{def9}, we can get that for every $\left(l, T_{lk}\right) \in \widehat{G}$, $\widehat{A} _{\left(l, T_{lk}\right)}^*=[\Gamma^{-1}(\widehat{A}^*)]_l (T_{lk})$ solves
\begin{align}\label{eq7}
\max _{\widehat{A} _{\left(l, T_{lk}\right)} \in \widehat{\mathcal{A}}_{(l,T_{lk})}} &\sum_{T_{-l} \in \mathcal{T}_{-{G_l}}}p(T_{-l}, T_{lk}) \nonumber \\
& c_i(T_{-l}, T_{l k} ;[\Gamma^{-1}(\widehat{A}^*)]_{-l}(T_{-l}),\widehat{A} _{\left(l, T_{lk}\right)}),\forall i \in G_l.
\end{align}

Because $\widehat{A}^*=\Gamma(s^*)$, we can obtain that
$$s_l^*(T_{l})=[\Gamma^{-1}(\widehat{A}^*)]_l (T_{l}),s_{-l}^*(T_{-l})=[\Gamma^{-1}(\widehat{A}^*)]_{-l} (T_{-l}).$$
Then we have
$$s_l^*(T_{lk})=[\Gamma^{-1}(\widehat{A}^*)]_l (T_{lk})=\widehat{A} _{\left(l, T_{lk}\right)}^*.$$

So (\ref{eq7}) can be rewritten as for every group $l \in \mathcal{D}_r$ and any $T_{lk} \in \mathcal{T}_{G_l}$, $s_{l}^{\ast}(T_{lk})$ solves
\begin{align}\label{eq8}
\max_{\widehat{A} _{\left(l, T_{lk}\right)} \in \widehat{\mathcal{A}}_{(l,T_{lk})}} &\sum_{T_{-l} \in \mathcal{T}_{-G_l}} p\left(T_{-l} , T_{lk}\right)\nonumber \\
&c_{i}(T_{-l}, T_{lk} ; s_{-l}^*\left(T_{-l}\right), \widehat{A} _{\left(l, T_{lk}\right)}),\forall i \in G_l.
\end{align}
Since $p\left(T_{-l} , T_{lk}\right)=p\left(T_{-l} \mid T_{lk}\right) p\left(T_{lk}\right)$ and $\widehat{\mathcal{A}}_{(l,T_{lk})}= \mathcal{A}_{G_l}$, (\ref{eq7}) can also be represented by
\begin{align}
\max_{A_{l} \in \mathcal{A}_{G_l}} &\sum_{T_{-l} \in \mathcal{T}_{-G_l}} p\left(T_{-l} \mid T_{lk}\right) \nonumber \\
&c_{i}(T_{-l}, T_{lk} ; s_{-l}^*\left(T_{-l}\right), A_{l}),\forall i \in G_l.
\end{align}
so $s^*$ is a multi-group Bayesian Nash equilibrium of $\mathcal{G}$. And vice versa. So this completes the proof.
\end{proof}

\begin{theorem}\label{th:thm2}
If a MBG $\mathcal{G}=( G,\mathcal{T},\mathcal{A},p,C)$ is (strongly) potential, then its MEAG $\widehat{\mathcal{G}}=\{\widehat{G}, \widehat{\mathcal{A}}, \widehat{C}\}$ is (strongly) potential.
\end{theorem}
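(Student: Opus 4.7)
The plan is to construct an explicit potential function for $\widehat{\mathcal{G}}$ out of the given potential $F$ of $\mathcal{G}$, and to verify the potential identity by a one-line comparison of type-summed expressions. The natural candidate is the ex-ante expectation
\begin{equation*}
\widehat{F}(\widehat{A}) := \sum_{T \in \mathcal{T}} p(T)\, F\bigl(T;\,[\Gamma^{-1}(\widehat{A})](T)\bigr),
\end{equation*}
which mirrors the construction of $\widehat{c}^{i}_{(l,T_{lk})}$ in (\ref{C_jian}) but with $F$ replacing $c_i$. I would first record the rewriting $\widehat{C}_{(l,T_{lk})}(\widehat{A})=\sum_{T_{-l}} p(T_{-l},T_{lk})\,C_{l}(T_{-l},T_{lk};[\Gamma^{-1}(\widehat{A})](T_{-l},T_{lk}))$, which follows by averaging (\ref{C_jian}) over $i\in G_l$ and using the definition $C_l=\frac{1}{m_l}\sum_{i\in G_l}c_i$.

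Next I would fix an agent $(l,T_{lk})\in\widehat{G}$ and compare $\widehat{F}(\widehat{A}_{(l,T_{lk})},\widehat{A}_{-(l,T_{lk})})$ with $\widehat{F}(\widehat{A}'_{(l,T_{lk})},\widehat{A}_{-(l,T_{lk})})$. The key observation is that changing only the action of agent $(l,T_{lk})$ alters $[\Gamma^{-1}(\widehat{A})]_l(T_l)$ solely at $T_l = T_{lk}$, by the bijectivity of $\Gamma$; hence every term in the sum over $T$ with $T_l\neq T_{lk}$ cancels. What remains is
\begin{equation*}
\sum_{T_{-l}} p(T_{-l},T_{lk})\Bigl[F\bigl(T_{-l},T_{lk};A_{-l},\widehat{A}_{(l,T_{lk})}\bigr)-F\bigl(T_{-l},T_{lk};A_{-l},\widehat{A}'_{(l,T_{lk})}\bigr)\Bigr],
\end{equation*}
where $A_{-l}=[\Gamma^{-1}(\widehat{A})]_{-l}(T_{-l})$ is unchanged. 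Now I would apply the potential identity (\ref{CF}) inside the bracket to replace the $F$-difference with the corresponding $C_l$-difference, and recognize the resulting sum as $\widehat{C}_{(l,T_{lk})}(\widehat{A})-\widehat{C}_{(l,T_{lk})}(\widehat{A}')$. This establishes that $\widehat{F}$ is a potential function of $\widehat{\mathcal{G}}$.

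For the strong version I would repeat exactly the same argument on a per-player basis: replace $C_l$ by $c_i$ throughout and use the strong potential identity (the stronger form of (\ref{CF})) to conclude that the same $\widehat{F}$ satisfies the payoff-difference equation for each $i\in G_l$ separately, hence is a strongly potential function of $\widehat{\mathcal{G}}$.

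The main obstacle is conceptual rather than computational: one must be careful that perturbing a single agent $(l,T_{lk})$ in the MEAG corresponds, via $\Gamma^{-1}$, to perturbing the group-$l$ strategy $s_l$ only at the single type $T_{lk}$, so that in the MBG-side sum over $T$ only the slice $\{T:T_l=T_{lk}\}$ contributes. Once this alignment between agent-level deviations in $\widehat{\mathcal{G}}$ and type-indexed strategy deviations in $\mathcal{G}$ is made explicit (using \Cref{def8} and the properties of $\Gamma$), the rest is algebraic bookkeeping with the weights $p(T_{-l},T_{lk})$ and poses no real difficulty.
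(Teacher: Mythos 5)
Your proposal is correct and follows essentially the same route as the paper: the same candidate $\widehat{F}(\widehat{A})=\sum_{T}p(T)F(T;[\Gamma^{-1}(\widehat{A})](T))$, the same splitting of the type sum into the slice $T_l=T_{lk}$ and its complement (which cancels because a single-agent deviation changes $\Gamma^{-1}(\widehat{A})$ only at $T_{lk}$), and the same application of the potential identity on the remaining slice. If anything, your version is slightly cleaner in keeping the $C_l$-form of (\ref{CF}) for the ordinary case and the $c_i$-form for the strong case, where the paper's displayed computation uses the $c_i$-form throughout.
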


\begin{proof}
We denote a potential function of the multi-group Bayesian potential game $\mathcal{G}$ as $F$. We construct the function $\widehat{F}:\widehat{\mathcal{A}} \rightarrow \mathbb{R}$ as
$$
\widehat{F}=\sum_{T \in \mathcal{T}} p\left(T\right)
F(T ; \Gamma^{-1}(\widehat{A})(T)),\forall \widehat{A} \in \widehat{\mathcal{A}}.
$$

For any agent $(l,T_{lk}) \in \widehat{G}$ and any $\widehat{A}=( \widehat{A} _{\left(l, T_{lk}\right)}, \widehat{A}_{-\left(l, T_{lk}\right)}), \widehat{A}^{\prime}=( \widehat{A}^{\prime}_{\left(l, T_{lk}\right)}, \widehat{A}_{-\left(l, T_{lk}\right)})\in \widehat{\mathcal{A}}$ in $\mathcal{G}$ 's MEAG $\widehat{\mathcal{G}}$, we can obtain that
\begin{align}\label{eq10}
& \widehat{F}(\widehat{A})-\widehat{F}(\widehat{A}^{\prime}) \nonumber \\
& =\sum_{T_{-l} \in \mathcal{T}_{-G_l}} p(T_{-l}, T_{lk})[F(T_{-l}, T_{lk} ;[\Gamma^{-1}(\widehat{A})]_{-l}(T_{-l}), \widehat{A} _{(l, T_{lk})})\nonumber \\
& -F(T_{-l}, T_{lk} ;[\Gamma^{-1}(\widehat{A})]_{-l}(T_{-l}), \widehat{A}^{\prime}_{(l, T_{lk})})] \nonumber \\
& +\sum_{T_{-l} \in \mathcal{T}_{-G_l}}  \sum_{T_{l} \in \mathcal{T}_{G_l}}^{T_{l} \neq T_{lk}} p(T_{-l}, T_{l})[F(T_{-l}, T_{l} ;[\Gamma^{-1}(\widehat{A})](T_{-l},T_{l})) \nonumber \\
& -F(T_{-l}, T_{l} ;[\Gamma^{-1}(\widehat{A}^{\prime})](T_{-l},T_{l}))]
\end{align}

For any $T_{-l} \in \mathcal{T}_{-G_l}$ and $T_{l} \in \mathcal{T}_{G_l}$ with $T_{l} \neq T_{lk}$, we have $[\Gamma^{-1}(\widehat{A})](T_{-l},T_{l}))=[\Gamma^{-1}(\widehat{A}^{\prime})](T_{-l},T_{l}))$, so the second term of (\ref{eq10}) equals $0$. Because $F$ is a potential function of the multi-group Bayesian potential game $\mathcal{G}$, we can obtain that
\begin{align}\label{eq11}
&F(T_{-l}, T_{lk} ;[\Gamma^{-1}(\widehat{A})]_{-l}(T_{-l}), \widehat{A} _{(l, T_{lk})})\nonumber \\
& -F(T_{-l}, T_{lk} ;[\Gamma^{-1}(\widehat{A})]_{-l}(T_{-l}), \widehat{A}^{\prime}_{(l, T_{lk})}) \nonumber \\
=&c_i(T_{-l}, T_{lk} ;[\Gamma^{-1}(\widehat{A})]_{-l}(T_{-l}), \widehat{A} _{(l, T_{lk})})\nonumber \\
& -c_i(T_{-l}, T_{lk} ;[\Gamma^{-1}(\widehat{A})]_{-l}(T_{-l}), \widehat{A}^{\prime}_{(l, T_{lk})}),\forall i \in G_l.
\end{align}
Substitute (\ref{eq11}) into (\ref{eq10}), then we have
\begin{align}\label{eq12}
& \widehat{F}(\widehat{A})-\widehat{F}(\widehat{A}^{\prime}) \nonumber \\
=& \sum_{T_{-l} \in \mathcal{T}_{-G_l}} p(T_{-l}, T_{lk})[c_i(T_{-l}, T_{lk} ;[\Gamma^{-1}(\widehat{A})]_{-l}(T_{-l}), \widehat{A} _{(l, T_{lk})})\nonumber \\
& -c_i(T_{-l}, T_{lk} ;[\Gamma^{-1}(\widehat{A})]_{-l}(T_{-l}), \widehat{A}^{\prime}_{(l, T_{lk})})] \nonumber \\
 =&\hat{c}_{(l,T_{lk})}^i(\widehat{A})-\hat{c}_{(l,T_{lk})}^i(\widehat{A}^{\prime}),\forall i \in G_l.
\end{align}

So $\widehat{F}$ is a potential function of $\widehat{\mathcal{G}}$. Then we can get that $\widehat{\mathcal{G}}$ is a potential game. This completes the proof.
\end{proof}

Finally, according to the above theorems, we can get the following essential conclusion. For every (strongly) multi-group Bayesian potential game, we can obtain its (strongly) multi-group Bayesian Nash equilibria by finding (strongly) Nash equilibria of its MEAG. What's more, for MBGs which are not (strongly) potential, as long as their MEAGs are (strongly) potential, this method will still work. Therefore, this transformation has very good properties.

As can be seen from \Cref{MBNE}, we need to find (strongly) MBNE of MBGs on a mapping space, which is very difficult, so we need to use this transformation with good properties to transform MBGs into normal-form games, whose (strongly) Nash equilibria are no longer mappings, but specific action \textit{profiles}. For a normal-form game, the method of using a potential equation to find its (strongly) Nash equilibria can be referred to \cite{cheng2014,Li2019}.

\section{Verification of potentiality and algorithms of finding (strongly) MBNE\label{Ver}}

In the above, we have given an overall framework for finding a MBG's (strongly) MBNE. We will give this method's specific steps in the following and provide algorithms of finding (strongly) MBNE.

\Cref{first} offers the algebraic forms of two functions that are related to (strongly) Nash equilibria of a MEAG. Based on the obtained algebraic forms, \Cref{second} gives the (strongly) potential equation of a MBG's MEAG to judge whether the MEAG is (strongly) potential. Based on the above, \Cref{third} gives algorithms and examples for finding (strongly) MBNE in MBGs whose MEAGs are (strongly) potential. 

\subsection{Algebraic form of the multi-group ex-ante agent game\label{first}}

According to \Cref{def9}, it is evident that Nash equilibria of a MEAG are closely related to each agent $(l,T_{lk})$'s payoff function $\widehat{C}_{(l,T_{lk})}(\widehat{A})$ and strongly Nash equilibria are closely related to each player's payoff function $\widehat{c}_{(l,T_{lk})}^i(\widehat{A})$, so we obtain the algebraic forms of these two functions in this subsection.

Firstly, we provide an introduction to the algebraic form of a pseudo-logical function. Intuitively, solving the algebraic form of a function is separating its variables from the remaining elements(denoted as the structure matrix) based on the properties of the STP.

Let's combine \Cref{lem2} to explain the algebraic form and structure matrix of a pseudo-logical function in detail. Every pseudo-logical function can be written as the product of two parts, the first being its structure matrix which is a row vector and the second being the STP of all variables which are column vectors. The variables and their STP are logical vectors, that is, column vectors with only one element being 1 and the rest being 0. 

There are two methods to solve the algebraic form of a function:
\begin{enumerate}[leftmargin=*, topsep=0pt, itemsep=0pt, parsep=0pt]
  \item[(i)] For a simple function, its algebraic form can be obtained by lexicographically ordering all possible values of variables and then calculating the corresponding function values into the structure matrix.
  \item[(ii)] For a complex function, its algebraic form is derived from the structure matrices of some simple functions by using the STP.
\end{enumerate}

Secondly, since $\widehat{C}_{(l,T_{lk})}(\widehat{A})$ and $\widehat{c}_{(l,T_{lk})}^i(\widehat{A})$ are computed by player $i$'s payoff function $c_i(t;a)$ in group $l$ and the probability distribution function $p(t)$ over $\mathcal{T}$, we use the first method to solve the algebraic forms of $c_i(t;a)$ and $p(t)$ first.

Given a MBG $\mathcal{G}=( G,\mathcal{T},\mathcal{A},p,C)$, without loss of generality, denote $\mid \mathcal{A}_i \mid=g$, $\mid \mathcal{T}_i \mid=e$, $i \in G$, $\mid \mathcal{A}_{G_l} \mid=g^{m_l}:=g_l$, $\mid \mathcal{T}_{G_l} \mid=e^{m_l}:=e_l$, $l\in \mathcal{D}_r$. We arrange type \textit{profiles} in lexicographical order as
\begin{align}\label{eq12}
& t^1=(t_{11},t_{21},\cdots,t_{m1}), t^2=(t_{11},t_{21},\cdots,t_{m2}), \cdots \nonumber \\
& t^{e^m-1}=(t_{1e},t_{2e},\cdots,t_{m[e-1]}), t^{e^m}=(t_{1e},t_{2e},\cdots,t_{me}). \nonumber
\end{align}
Based on the probability distribution $p: \mathcal{T} \rightarrow[0,1]$, we define a probability vector as
\begin{equation}\label{P}
P=[p(t^1),p(t^2),\cdots,p(t^{e^m})].
\end{equation}
Identify $ \mathcal{T}_i \sim \Delta_e$, $ \mathcal{A}_i \sim \Delta_g$. Then for any $t=(t_1,\cdots,t_m) \in \mathcal{T}$ and any $a=(a_1,\cdots,a_m) \in \mathcal{A}$, we identify $t \sim \ltimes_{i=1}^m t_i \in \Delta_{e^m}$, $a \sim \ltimes_{i=1}^m a_i \in \Delta_{g^m}$. Therefore, by \Cref{lem2}, we can get
\begin{equation}\label{pt}
p(t)=P \ltimes t,
\end{equation}
\begin{equation}\label{cita1}
c_i(t;a)=\mathbb{C}_i \ltimes t \ltimes a,
\end{equation}
where $P$ is defined by \Cref{P}, and $\mathbb{C}_i \in \mathbb{R}_{1 \times (eg)^m}$ contains all possible values of $c_i(t;a)$. Define
\begin{equation}\label{mathbbC}
\mathbb{C}=[\mathbb{C}_1,\mathbb{C}_2,\cdots, \mathbb{C}_m],
\end{equation}
then \Cref{cita1} can be expressed as
\begin{equation}\label{cita2}
c_i(t;a)=\mathbb{C} \ltimes \delta_m^i \ltimes t \ltimes a.
\end{equation}

Finally, using the algebraic forms obtained above, we use the second method to solve the algebraic forms of $\widehat{C}_{(l,T_{lk})}(\widehat{A})$ and $\widehat{c}_{(l,T_{lk})}^i(\widehat{A})$, thereby laying a foundation for finding (strongly) Nash equilibria of a MEAG by using potential functions. 

Since $\widehat{C}_{(l,T_{lk})}(\widehat{A})$ and $\widehat{c}_{(l,T_{lk})}^i(\widehat{A})$ are functions of the variables $\widehat{A}$ and $T$, where $T=t$ and $\widehat{A}$ is a function of $t$ and $a$, we use the following proposition to change the variables of $c_i$ from $a$, $t$ to $\widehat{A}$, $T$. That is to say, we will give the algebraic form of $c_i$ with respect to $\widehat{A}$ and $T$. 

\begin{proposition}\label{pro2}
In the MEAG $\widehat{\mathcal{G}}=\{\widehat{G}, \widehat{\mathcal{A}}, \widehat{C}\}$ of a MBG $\mathcal{G}=( G,\mathcal{T},\mathcal{A},p,C)$, if the action \textit{profile} $\widehat{A} \in \widehat{\mathcal{A}}$ is given, then under an arbitrary fixed type \textit{profile} $T\in  \mathcal{T} $, player $i$'s payoff function in $\mathcal{G}$ is
\begin{equation}\label{ci}
c_i(T;\Gamma^{-1}(\widehat{A})(T))=\mathbb{C} \ltimes {\delta}_m^i \ltimes \Phi(\widehat{A}) \ltimes T,
\end{equation}
where $\Phi(\widehat{A})=I_{\widetilde{e}} \otimes [{\otimes}_{l=1}^r E(\widehat{A},l)] \ltimes O_{\widetilde{e}}$.
\end{proposition}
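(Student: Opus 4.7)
The plan is to derive the algebraic form by starting from equation (\ref{cita2}) and expressing the action profile $a = \Gamma^{-1}(\widehat{A})(T)$ as a linear function (in the STP sense) of the type profile $T$. Since by the definition of $\Gamma^{-1}$ we have $[\Gamma^{-1}(\widehat{A})](T) = (\widehat{A}_{(1,T_1)}, \ldots, \widehat{A}_{(r,T_r)})$ when $T=(T_1,\ldots,T_r)$, the first step is to exhibit, for each group $l$, a matrix $E(\widehat{A},l)\in\mathcal{L}_{g_l\times e_l}$ whose $k$-th column is the logical-vector form of $\widehat{A}_{(l,T_{lk})}\in\widehat{\mathcal{A}}_{(l,T_{lk})}=\mathcal{A}_{G_l}\sim \Delta_{g_l}$. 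By construction, $E(\widehat{A},l)\ltimes T_l=\widehat{A}_{(l,T_l)}$ for every $T_l\in\Delta_{e_l}$, so each group's realized action is recovered as a linear action of its type vector.

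Next, I would assemble the full action profile. Setting $a_l=E(\widehat{A},l)T_l$, property (v) of \Cref{lem1} gives
\begin{equation*}
a=\ltimes_{l=1}^r a_l=\ltimes_{l=1}^r E(\widehat{A},l)T_l=\Bigl[\otimes_{l=1}^r E(\widehat{A},l)\Bigr]\ltimes_{l=1}^r T_l=\Bigl[\otimes_{l=1}^r E(\widehat{A},l)\Bigr]T,
\end{equation*}
where $T=\ltimes_{l=1}^r T_l\in\Delta_{\widetilde{e}}$ with $\widetilde{e}=\prod_{l=1}^r e_l=e^m$. Substituting this expression into (\ref{cita2}) yields $c_i(T;\Gamma^{-1}(\widehat{A})(T))=\mathbb{C}\ltimes\delta_m^i\ltimes T\ltimes\bigl[\otimes_{l=1}^r E(\widehat{A},l)\bigr]T$, so the only remaining task is to move the first copy of $T$ past the tensor-product matrix and then collapse the two copies of $T$ into one.

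For the reordering, I would invoke property (i) of \Cref{lem1} with $Q=T\in\mathbb{R}_{\widetilde{e}\times 1}$ and $C=\otimes_{l=1}^r E(\widehat{A},l)\in\mathbb{R}_{g^m\times\widetilde{e}}$ to obtain $T\ltimes\bigl[\otimes_{l=1}^r E(\widehat{A},l)\bigr]=\bigl(I_{\widetilde{e}}\otimes\bigl[\otimes_{l=1}^r E(\widehat{A},l)\bigr]\bigr)T$. Then property (iii) of \Cref{lem1} gives $T\ltimes T=O_{\widetilde{e}}T$, so the chain of equalities
\begin{equation*}
c_i(T;\Gamma^{-1}(\widehat{A})(T))=\mathbb{C}\ltimes\delta_m^i\ltimes\Bigl(I_{\widetilde{e}}\otimes\bigl[\otimes_{l=1}^r E(\widehat{A},l)\bigr]\Bigr)\ltimes O_{\widetilde{e}}\ltimes T
\end{equation*}
completes the argument with $\Phi(\widehat{A})=I_{\widetilde{e}}\otimes\bigl[\otimes_{l=1}^r E(\widehat{A},l)\bigr]\ltimes O_{\widetilde{e}}$, as required.

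\textbf{Anticipated obstacle.} The routine algebra using STP identities is straightforward; the main care is in setting up $E(\widehat{A},l)$ correctly so that its columns are indexed by types in the same lexicographic order used to identify $T_l$ with $\Delta_{e_l}$, and in verifying that the dimensions align when combining the Kronecker product of the $E(\widehat{A},l)$ with the collapse $T\ltimes T=O_{\widetilde{e}}T$. Once the bookkeeping between $\widetilde{e}=\prod_l e_l$ and $g^m=\prod_l g_l$ is handled consistently with the identifications $\mathcal{A}\sim\Delta_{g^m}$ and $\mathcal{T}\sim\Delta_{\widetilde{e}}$, the conclusion follows directly.
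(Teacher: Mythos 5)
Your proposal is correct and follows essentially the same route as the paper's proof: build $E(\widehat{A},l)$ column-by-column from the agents' actions so that $\Gamma^{-1}(\widehat{A})_l(T_l)=E(\widehat{A},l)T_l$, assemble the profile with Lemma~\ref{lem1}(v), substitute into (\ref{cita2}), and then use Lemma~\ref{lem1}(i) and the order-reducing identity $T^2=O_{\widetilde{e}}T$ to isolate $T$. The only cosmetic difference is that the paper realizes the columns of $E(\widehat{A},l)$ explicitly via the retriever matrices $M_{[\cdot,\cdot,\cdot]}$ acting on the vector form of $\widehat{A}$, whereas you define them directly; the two constructions coincide.
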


\begin{proof} 
Identify $\widehat{\mathcal{A}}_{(l,T_{lk})} \sim \Delta_{g_l}$. Then we can get $ \widehat{A}=(\widehat{A}_{(1, T_{1 1})},\cdots,\widehat{A}_{(1, T_{1 e_1})},\cdots,\widehat{A}_{(r, T_{r 1})},\cdots,\widehat{A}_{(r, T_{r e_r})}) \sim \ltimes_{l=1}^r \ltimes_{k=1}^{e_l} \widehat{A}_{(l, T_{l k})} \in \Delta_{\widetilde{h}}$, where $\widetilde{h}={\prod}_{l=1}^r h_l$, $h_l=g_l^{e_l}$.

By virtue of Definition \ref{def1} and Lemma \ref{lem1}, we can easily get that
\begin{align}
\Gamma^{-1}(\widehat{A})_l(T_{lk_l})=\widehat{A}_{(l, T_{l k_l})}
=M_{[g_l^{k_l-1},g_l,g_l^{e_l-k_l}]} M_{[\widetilde{h}_{1l},h_l,\widetilde{h}_{lr}]}\widehat{A},
 \nonumber
\end{align}
where $\widetilde{h}_{1l}={\prod}_{j=1}^{l-1} h_j$, $\widetilde{h}_{lr}={\prod}_{j=l+1}^{r} h_j$. Define $E(\widehat{A},l)=[M_{[1,g_l,g_l^{e_l-1}]} M_{[\widetilde{h}_{1l},h_l,\widetilde{h}_{lr}]}\widehat{A},\cdots,M_{[g_l^{e_l-1},g_l,1]} M_{[\widetilde{h}_{1l},h_l,\widetilde{h}_{lr}]}\widehat{A}]$, then $\Gamma^{-1}(\widehat{A})_l(T_{l})=E(\widehat{A},l) T_{l}$.

Therefore, by using Lemma \ref{lem1}, we can get the vector form of $\Gamma^{-1}(\widehat{A})(T)=(\Gamma^{-1}(\widehat{A})_1(T_{1}),\cdots,\Gamma^{-1}(\widehat{A})_r(T_{r}))$ as
\begin{align}
 &\Gamma^{-1}(\widehat{A})(T)={\ltimes}_{l=1}^{r}\Gamma^{-1}(\widehat{A})_l(T_{l}) \nonumber \\
=&{\ltimes}_{l=1}^{r} E(\widehat{A},l) T_{l}=[{\otimes}_{l=1}^r E(\widehat{A},l)]{\ltimes}_{l=1}^r T_{l}. \nonumber
\end{align}

By virtue of (\ref{cita2}), we can obtain that player $i$'s payoff function is
\begin{align}
&c_i(T;\Gamma^{-1}(\widehat{A})(T))=\mathbb{C} \ltimes {\delta}_m^i  \ltimes T \ltimes \Gamma^{-1}(\widehat{A})(T) \nonumber \\
=&\mathbb{C} \ltimes {\delta}_m^i  \ltimes T \ltimes [{\otimes}_{l=1}^r E(\widehat{A},l)] \ltimes T \nonumber \\
=&\mathbb{C} \ltimes {\delta}_m^i  \ltimes I_{\widetilde{e}} \otimes [{\otimes}_{l=1}^r E(\widehat{A},l)] \ltimes O_{\widetilde{e}} \ltimes T, \nonumber
\end{align}
where $\widetilde{e}={\prod}_{l=1}^r e_l$.
\end{proof} 

Based on \Cref{ci}, we can derive each player's payoff function $\widehat{c}_{(l,T_{lk})}^i(\widehat{A})$ and each agent $(l,T_{lk})$'s payoff function $\widehat{C}_{(l,T_{lk})}(\widehat{A})$, as shown in the following theorem.

\begin{theorem}\label{the3}
In the MEAG $\widehat{\mathcal{G}}=\{\widehat{G}, \widehat{\mathcal{A}}, \widehat{C}\}$ of a MBG $\mathcal{G}=( G,\mathcal{T},\mathcal{A},p,C)$, if the action \textit{profile} $\widehat{A}\in \widehat{\mathcal{A}}$ is given, then player $i$'s payoff function in agent $(l,T_{lk})$ is termed as
\begin{equation}\label{c_jian_i}
\widehat{c}_{(l,T_{lk})}^i(\widehat{A})
= P \ltimes \Xi(l,k) \ltimes \Phi^{T}(\widehat{A}) \ltimes  ({\delta}_m^i)^{T} \ltimes {\mathbb{C}}^{T}
\end{equation}
and agent $\left(l, T_{lk}\right) $'s payoff function is represented by
\begin{equation}\label{c_jian_l}
\widehat{C}_{(l,T_{lk})}(\widehat{A})
= \frac{1}{m_l} P \ltimes \Xi(l,k) \ltimes \Phi^{T}(\widehat{A}) \ltimes \sum_{i \in G_l} ({\delta}_m^i)^{T} \ltimes {\mathbb{C}}^{T},
\end{equation}
where $\Xi(l,k)=I_{\widetilde{e}_{1l}} \otimes ({\delta}_{e_l}^k({\delta}_{e_l}^k)^{T})$.
\end{theorem}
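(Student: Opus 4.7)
The plan is to derive the algebraic form of $\widehat{c}_{(l,T_{lk})}^{i}$ directly from its defining summation (\ref{C_jian}) and then obtain $\widehat{C}_{(l,T_{lk})}$ by averaging over $i\in G_l$. First I would substitute the two algebraic ingredients already available: $p(T)=P\ltimes T$ from (\ref{pt}) and, via \Cref{pro2}, $c_i(T;\Gamma^{-1}(\widehat{A})(T))=\mathbb{C}\ltimes\delta_m^i\ltimes\Phi(\widehat{A})\ltimes T$. Since both factors are scalars, I would transpose the second one so that the $T$-dependence can be concentrated in the rank-one block $TT^{T}$; after this manipulation the summand reads $P\,T\,T^{T}\Phi^{T}(\widehat{A})(\delta_m^{i})^{T}\mathbb{C}^{T}$, with $P$, $\Phi^{T}(\widehat{A})$, $(\delta_m^{i})^{T}$ and $\mathbb{C}^{T}$ all independent of the summation variable.

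The central step is to evaluate $\sum_{T_{-l}\in\mathcal{T}_{-G_l}}TT^{T}$ with $T_l=T_{lk}=\delta_{e_l}^{k}$ held fixed. Writing $T\sim T_1\ltimes\cdots\ltimes T_r$ and recalling that the STP of column vectors coincides with the Kronecker product, I would use $TT^{T}=\bigotimes_{j=1}^{r}T_jT_j^{T}$ together with the elementary identity $\sum_{T_j\in\Delta_{e_j}}T_jT_j^{T}=I_{e_j}$ for every $j\neq l$, while the fixed $l$-th slot contributes $\delta_{e_l}^{k}(\delta_{e_l}^{k})^{T}$. Collecting the Kronecker factors this yields $I_{\widetilde e_{1l}}\otimes\bigl(\delta_{e_l}^{k}(\delta_{e_l}^{k})^{T}\bigr)\otimes I_{\widetilde e_{lr}}$, where $\widetilde e_{1l}=\prod_{j<l}e_j$ and $\widetilde e_{lr}=\prod_{j>l}e_j$. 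The trailing identity is absorbed automatically by the STP padding rule in \Cref{def1} applied to $\Xi(l,k)\ltimes\Phi^{T}(\widehat{A})$, so this matrix is exactly $\Xi(l,k)$ as defined in the statement, and substitution produces (\ref{c_jian_i}).

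Equation (\ref{c_jian_l}) then follows immediately by linearity: since $\widehat{C}_{(l,T_{lk})}=\tfrac{1}{m_l}\sum_{i\in G_l}\widehat{c}_{(l,T_{lk})}^{i}$ and every factor in (\ref{c_jian_i}) other than $(\delta_m^{i})^{T}$ is independent of $i$, the sum $\sum_{i\in G_l}(\delta_m^{i})^{T}$ can simply be inserted at the appropriate slot in the STP chain. The main obstacle I anticipate is purely bookkeeping: keeping STP dimensions straight when transposing $\mathbb{C}\ltimes\delta_m^{i}\ltimes\Phi(\widehat{A})\ltimes T$, and verifying cleanly that the implicit trailing $I_{\widetilde e_{lr}}$ produced by $\sum TT^{T}$ is exactly what the STP of $\Xi(l,k)$ with $\Phi^{T}(\widehat{A})$ supplies, so that the matrix $\Xi(l,k)$ written in the theorem matches the one obtained from the computation.
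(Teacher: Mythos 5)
Your proposal is correct, and it reaches the same formula by a somewhat different route than the paper. The paper first uses the swap matrix to pull $T_l$ to the front, writing $T=W_{[e_l,\widetilde e_{1l}]}\ltimes T_l\ltimes T_{-l}$, so that after fixing $T_l=\delta_{e_l}^{k}$ both $p(T_{lk},T_{-l})$ and $c_i(\cdot)$ become pseudo-logical functions of $T_{-l}$ alone with explicit structure vectors; it then invokes \Cref{cor1} to turn the sum over $T_{-l}$ into the product $L_{p}L_{c_i}^{T}$, and finally uses Lemma \ref{lem1}(vi) to reduce $W_{[e_l,\widetilde e_{1l}]}\,\delta_{e_l}^{k}(\delta_{e_l}^{k})^{T}\,W_{[\widetilde e_{1l},e_l]}$ to $I_{\widetilde e_{1l}}\otimes(\delta_{e_l}^{k}(\delta_{e_l}^{k})^{T})=\Xi(l,k)$. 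You instead keep $T$ in its natural order, transpose the scalar $c_i$ so the summand becomes $P\,TT^{T}\,\Phi^{T}(\widehat A)(\delta_m^{i})^{T}\mathbb{C}^{T}$, and evaluate $\sum_{T_{-l}}TT^{T}=I_{\widetilde e_{1l}}\otimes(\delta_{e_l}^{k}(\delta_{e_l}^{k})^{T})\otimes I_{\widetilde e_{lr}}$ directly from the Kronecker factorization and $\sum_{T_j}T_jT_j^{T}=I_{e_j}$. Your key bookkeeping observation is also right: since $\Xi(l,k)$ has $\widetilde e_{1l}e_l$ columns and $\Phi^{T}(\widehat A)$ has $\widetilde e$ rows, Definition \ref{def1} pads $\Xi(l,k)$ with exactly $I_{\widetilde e_{lr}}$, so $\Xi(l,k)\ltimes\Phi^{T}(\widehat A)=(\Xi(l,k)\otimes I_{\widetilde e_{lr}})\Phi^{T}(\widehat A)$ and the trailing identity is absorbed as you claim. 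The averaging step for (\ref{c_jian_l}) is the same in both arguments. What your route buys is the avoidance of the swap-matrix manipulation and of \Cref{cor1} entirely, at the cost of having to justify the Kronecker factorization of $TT^{T}$ and the padding identity explicitly; the paper's route stays entirely within its pre-established STP toolkit, which is why it prefers the swap-matrix detour.
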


\begin{proof} 
By using Lemma \ref{lem1}, we have
\begin{align}\label{T}
T=&T_{1} \ltimes \ldots \ltimes T_{l-1} \ltimes T_{l} \ltimes T_{l+1} \ltimes \ldots \ltimes T_{r} \nonumber \\
=&W_{[e_l,\widetilde{e}_{1l}]} \ltimes T_{l} \ltimes T_{1} \ltimes \ldots \ltimes T_{l-1} \ltimes  T_{l+1} \ltimes \ldots \ltimes T_{r} \nonumber \\
=&W_{[e_l,\widetilde{e}_{1l}]} \ltimes T_{l} \ltimes T_{-l},
\end{align}
where $\widetilde{e}_{1l}={\prod}_{j=1}^{l-1} e_j$. Then (\ref{pt}) can be rewritten as
$$p(T_{l},T_{-l})=P \ltimes T=P \ltimes W_{[e_l,\widetilde{e}_{1l}]} \ltimes T_{l} \ltimes T_{-l}.
$$
When $T_{l}=T_{lk}={\delta}_{e_l}^k$, we have
$$p(T_{lk},T_{-l})=P \ltimes W_{[e_l,\widetilde{e}_{1l}]} \ltimes {\delta}_{e_l}^k \ltimes T_{-l}.
$$

According to Lemma \ref{lem2}, we obtain the structure vector of function $p(T_{lk},T_{-l})$ as
$$
L_{p(T_{lk},T_{-l})}=P \ltimes W_{[e_l,\widetilde{e}_{1l}]} \ltimes {\delta}_{e_l}^k.
$$

For every player $i$ in group $l$, by virtue of (\ref{ci}) and (\ref{T}), we can get
\begin{equation}\label{ci_new}
c_i(T;\Gamma^{-1}(\widehat{A})(T))=\mathbb{C} \ltimes {\delta}_m^i \ltimes \Phi(\widehat{A}) \ltimes W_{[e_l,\widetilde{e}_{1l}]} \ltimes T_{l} \ltimes T_{-l}.
\end{equation}
When $T_{l}=T_{lk}={\delta}_{e_l}^k$, we have
$$
c_i(T;\Gamma^{-1}(\widehat{A})(T_{lk},T_{-l}))=\mathbb{C} \ltimes {\delta}_m^i \ltimes \Phi(\widehat{A}) \ltimes W_{[e_l,\widetilde{e}_{1l}]} \ltimes {\delta}_{e_l}^k \ltimes T_{-l}.
$$

According to Lemma \ref{lem2}, we obtain the structure vector of function $c_i(T;\Gamma^{-1}(\widehat{A})(T_{lk},T_{-l}))$ as
$$
L_{c_i(T;\Gamma^{-1}(\widehat{A})(T_{lk},T_{-l}))}=\mathbb{C} \ltimes {\delta}_m^i \ltimes \Phi(\widehat{A}) \ltimes W_{[e_l,\widetilde{e}_{1l}]} \ltimes {\delta}_{e_l}^k.
$$

Using Corollary \ref{cor1}, (\ref{C_jian}) can be rewritten as
\begin{align}\label{C_jian_new}
&\hat{c}_{(l,T_{lk})}^i(\widehat{A})\nonumber \\
=&\sum_{T_{-l} \in \mathcal{T}_{-{G_l}}} p(T_{lk},T_{-l}) c_i(T_{lk},T_{-l}; [\Gamma^{-1}(\widehat{A})](T_{lk},T_{-l})) \nonumber \\
       =&L_{p(T_{lk},T_{-l})}L^{T}_{c_i(T;\Gamma^{-1}(\widehat{A})(T_{lk},T_{-l}))} \nonumber \\
=&P \ltimes W_{[e_l,\widetilde{e}_{1l}]} \ltimes {\delta}_{e_l}^k  ({\delta}_{e_l}^k)^{T}  \ltimes W_{[e_l,\widetilde{e}_{1l}]}^{T} \ltimes \Phi(\widehat{A})^{T} \ltimes ({\delta}_m^i)^{T} \ltimes \mathbb{C}^{T}.
\end{align}

Using Lemma \ref{lem1}, we have
\begin{align}\label{sum}
&W_{[e_l,\widetilde{e}_{1l}]} \ltimes {\delta}_{e_l}^k  ({\delta}_{e_l}^k)^{T}  \ltimes W_{[e_l,\widetilde{e}_{1l}]}^{T} \nonumber \\
       =&W_{[e_l,\widetilde{e}_{1l}]} \ltimes {\delta}_{e_l}^k  ({\delta}_{e_l}^k)^{T}  \ltimes W_{[\widetilde{e}_{1l},e_l]} \nonumber \\
=&I_{\widetilde{e}_{1l}} \otimes ({\delta}_{e_l}^k({\delta}_{e_l}^k)^{T}):=\Xi(l,k) .
\end{align}

Then (\ref{C_jian_new}) can be rewritten as
\begin{align}\label{C_jian_neww}
\hat{c}_{(l,T_{lk})}^i(\widehat{A})=P \ltimes \Xi(l,k) \ltimes \Phi(\widehat{A})^{T} \ltimes ({\delta}_m^i)^{T} \ltimes \mathbb{C}^{T}. \nonumber
\end{align}
And agent $\left(l, T_{lk}\right) $'s payoff function can be represented by
\begin{align*}
\hat{C}_{(l,T_{lk})}(\widehat{A})=&\frac{\sum_{i \in G_l}  \hat{c}_{(l,T_{lk})}^i(\widehat{A})}{m_l}\\
=&\frac{1}{m_l} P \ltimes \Xi(l,k) \ltimes \Phi^{T}(\widehat{A}) \ltimes \sum_{i \in G_l} ({\delta}_m^i)^{T} \ltimes {\mathbb{C}}^{T}.
\end{align*}

The prove is completed.
\end{proof} 

It is not difficult to find that in (\ref{c_jian_i}) and,(\ref{c_jian_l}, variable $\widehat{A}=\mathop{\ltimes}\limits_{l=1}^r  \mathop{\ltimes}\limits_{k=1}^{e_l} \widehat{A}_{{(l,T_{lk})}}$ is not separated, so in order to obtain the algebraic form of $\widehat{c}_{(l,T_{lk})}^i(\widehat{A})$ and $\widehat{C}_{(l,T_{lk})}(\widehat{A})$, we give the following proposition.

\begin{proposition}\label{pro3}
In a MEAG $\widehat{\mathcal{G}}=\{\widehat{G}, \widehat{\mathcal{A}}, \widehat{C}\}$, the algebraic form of player $i$'s payoff function in agent $(l,T_{lk})$ can be represented by
\begin{equation}\label{Sc_jian}
\widehat{c}_{(l,T_{lk})}^i(\widehat{A})=L_{\widehat{c}_{(l,T_{lk})}^i} \mathop{\ltimes}\limits_{l=1}^r  \mathop{\ltimes}\limits_{k=1}^{e_l} \widehat{A}_{{(l,T_{lk})}}
\end{equation}
and the algebraic form of agent $\left(l, T_{lk}\right) $'s payoff function can be denoted by
\begin{equation}\label{c_jian}
\widehat{C}_{(l,T_{lk})}(\widehat{A})=L_{\widehat{C}_{(l,T_{lk})}} \mathop{\ltimes}\limits_{l=1}^r  \mathop{\ltimes}\limits_{k=1}^{e_l} \widehat{A}_{{(l,T_{lk})}},
\end{equation}
where $L_{\widehat{c}_{(l,T_{lk})}^i} =P\ltimes \theta_{l,k,i}, i \in G_l$, and $L_{\widehat{C}_{(l,T_{lk})}}=P\ltimes \psi_{l,k}$.
\end{proposition}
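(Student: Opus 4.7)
The plan is to start from the closed forms (\ref{c_jian_i}) and (\ref{c_jian_l}) already obtained in \Cref{the3} and separate the variable $\widehat{A}$ from the constant matrices, pushing $\widehat{A}$ to the rightmost position so that identifying $\widehat{A}$ with $\mathop{\ltimes}\limits_{l=1}^r \mathop{\ltimes}\limits_{k=1}^{e_l} \widehat{A}_{(l,T_{lk})}$ yields the claimed algebraic forms (\ref{Sc_jian}) and (\ref{c_jian}). Conceptually, $\widehat{c}_{(l,T_{lk})}^i$ is a pseudo-logical function on $\Delta_{\widetilde{h}}$, so by \Cref{lem2} it has some structure vector; the real content is to show that this structure vector factors as $P \ltimes \theta_{l,k,i}$.

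The key step is to analyse the $\widehat{A}$-dependence of $\Phi(\widehat{A}) = I_{\widetilde{e}} \otimes [\otimes_{l=1}^r E(\widehat{A},l)] \ltimes O_{\widetilde{e}}$. Each column of $E(\widehat{A},l)$ has the form $K_{l,k}\widehat{A}$ with $K_{l,k} := M_{[g_l^{k-1},g_l,g_l^{e_l-k}]} M_{[\widetilde{h}_{1l},h_l,\widetilde{h}_{lr}]}$ constant, so $E(\widehat{A},l)$ is linear in $\widehat{A}$ and $\otimes_{l=1}^r E(\widehat{A},l)$ is multilinear, containing $\sum_{l=1}^r e_l$ occurrences of $\widehat{A}$. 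I would first invoke \Cref{lem1}(v) to rewrite this Kronecker product as an appropriate constant matrix acting on $\mathop{\ltimes}\limits_{l=1}^r \mathop{\ltimes}\limits_{k=1}^{e_l} \widehat{A}$, which is the STP power $\widehat{A}^{\widetilde{e}}$, and then iterate \Cref{lem1}(iii), $\widehat{A}^2 = O_{\widetilde{h}}\widehat{A}$, to collapse $\widehat{A}^{\widetilde{e}}$ into a single factor $\widehat{A}$ premultiplied by a suitable product of order-reducing matrices. Substituting back into (\ref{c_jian_i}), commuting this combined constant past $\Xi(l,k)$, $({\delta}_m^i)^T$, and $\mathbb{C}^T$ via \Cref{lem1}(i),(ii),(vi), produces an expression of the form $P \ltimes \theta_{l,k,i} \ltimes \widehat{A}$; setting $L_{\widehat{c}_{(l,T_{lk})}^i} := P \ltimes \theta_{l,k,i}$ gives (\ref{Sc_jian}). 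Since (\ref{c_jian_l}) differs from (\ref{c_jian_i}) only by replacing $({\delta}_m^i)^T$ with $\frac{1}{m_l}\sum_{i \in G_l}({\delta}_m^i)^T$, the same manipulation yields (\ref{c_jian}) with $\psi_{l,k} = \frac{1}{m_l}\sum_{i \in G_l}\theta_{l,k,i}$.

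The main obstacle is the dimensional bookkeeping: the indices $\widetilde{e}$, $\widetilde{h}$, $h_l$, $e_l$, $g_l$ proliferate quickly, and each commutation of $\widehat{A}$ past $I_{\widetilde{e}} \otimes (\cdot)$, past the swap matrices implicit in $\Xi(l,k)$, and past the order reducer $O_{\widetilde{e}}$ requires inserting identities of the right sizes before applying \Cref{lem1}. Fortunately the proposition only asserts the existence of $\theta_{l,k,i}$ and $\psi_{l,k}$ together with the common left-factor $P$, so the proof need not spell out the entries of these matrices; it suffices to exhibit the chain of manipulations and package every remaining constant factor into $\theta_{l,k,i}$.
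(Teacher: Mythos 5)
Your starting point (the closed forms (\ref{c_jian_i}) and (\ref{c_jian_l}) from \Cref{the3}) and your goal (separate $\widehat{A}$ to the right) match the paper, but the paper does not perform any symbolic separation at all. It uses the ``first method'' of \Cref{first}: order the agent action \emph{profiles} $\widehat{A}^{1},\dots,\widehat{A}^{\widetilde h}$, define $L_{\widehat{c}_{(l,T_{lk})}^i}$ columnwise by $\operatorname{Col}_{\alpha}(L_{\widehat{c}_{(l,T_{lk})}^i})=\widehat{c}_{(l,T_{lk})}^i(\widehat{A}^{\alpha})$, and define $\theta_{l,k,i}$ columnwise by $\operatorname{Col}_{\alpha}(\theta_{l,k,i})=\Xi(l,k)\ltimes\Phi^{T}(\widehat{A}^{\alpha})\ltimes(\delta_m^i)^{T}\ltimes\mathbb{C}^{T}$. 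Because $P$ is already the leftmost factor of (\ref{c_jian_i}), one gets $\operatorname{Col}_{\alpha}(L_{\widehat{c}_{(l,T_{lk})}^i})=P\ltimes\operatorname{Col}_{\alpha}(\theta_{l,k,i})$ for every $\alpha$, hence $L_{\widehat{c}_{(l,T_{lk})}^i}=P\ltimes\theta_{l,k,i}$, and identically for $\widehat{C}_{(l,T_{lk})}$ with $\psi_{l,k}$. That is the whole proof. Your closing remark --- that only the existence of $\theta_{l,k,i}$ and $\psi_{l,k}$ is asserted --- is precisely the observation that makes this evaluation argument sufficient; having made it, the STP machinery you invoke is unnecessary.

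As for the symbolic route you actually sketch (which would, if completed, yield an explicit closed form for $\theta_{l,k,i}$ that the paper never provides), it has three concrete gaps. First, \Cref{lem1}(v) applies to an STP chain $\ltimes_{i}(M_iA_i)$ of logical \emph{column vectors}, whereas $\otimes_{l=1}^{r}E(\widehat{A},l)$ is a Kronecker product of \emph{matrices}, each a horizontal concatenation of $e_l$ distinct linear images $K_{l,k}\widehat{A}$ of the same $\widehat{A}$; (v) does not apply until each $E(\widehat{A},l)$ has first been rewritten as a constant matrix STP-multiplied by $\widehat{A}$, which requires an additional reshaping step you do not supply. Second, the exponent bookkeeping is inconsistent: $\mathop{\ltimes}_{l=1}^{r}\mathop{\ltimes}_{k=1}^{e_l}\widehat{A}$ has $\sum_{l}e_l$ factors, not $\widetilde e=\prod_{l}e_l$. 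Third, and most seriously, in (\ref{c_jian_i}) the variable sits inside $\Phi^{T}(\widehat{A})$; transposing turns the column vector $\widehat{A}$ into a row vector, and \Cref{lem1}(i),(ii) only move logical \emph{column} vectors, so ``commuting the combined constant past $\Xi(l,k)$, $(\delta_m^i)^{T}$ and $\mathbb{C}^{T}$'' does not by itself return $\widehat{A}$ to the rightmost position. Each of these is repairable, but the natural repair is to act on basis vectors $\widehat{A}^{\alpha}$ one at a time --- which is exactly the paper's columnwise argument.
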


\begin{proof} 
In $\widehat{\mathcal{G}}$, we arrange action profiles $ \widehat{A}=(\widehat{A}_{(1, T_{1 1})},\cdots,\widehat{A}_{(1, T_{1 e_1})},\cdots,\widehat{A}_{(r, T_{r 1})},\cdots,\widehat{A}_{(r, T_{r e_r})})$ in $\widehat{\mathcal{A}}$ in alphabetic order as
\begin{align}
&\widehat{A}^{1}=(A_{1 1},\cdots,A_{11},\cdots,A_{r1},\cdots,A_{r1}) , \nonumber  \\
&\widehat{A}^{2}=(A_{1 1},\cdots,A_{11},\cdots,A_{r1},\cdots,A_{r2}) ,\cdots , \nonumber \\
&\widehat{A}^{\widetilde{h}}=(A_{1 e_1},\cdots,A_{1e_1},\cdots,A_{re_r},\cdots,A_{re_r}). \nonumber
\end{align}
For every $l \in [1;r]$ and $k \in [1;e_l]$, we define the vector $L_{\hat{c}_{(l,T_{lk})}^i} \in \mathbb{R}_{1 \times \widetilde{h}}$ as
\begin{equation}\label{Clk1}
\operatorname{Col}_{\alpha}(L_{\hat{c}_{(l,T_{lk})}^i}):=\hat{c}_{(l,T_{lk})}^i(\widehat{A}^{\alpha}),\alpha \in [1;\widetilde{h}].
\end{equation}


By using the vector form of $\widehat{A}$, we can obtain that
\begin{equation}
\hat{c}_{(l,T_{lk})}^i(\widehat{A})=L_{\hat{c}_{(l,T_{lk})}^i} \mathop{\ltimes}\limits_{l=1}^r  \mathop{\ltimes}\limits_{k=1}^{e_l} \widehat{A}_{{(l,T_{lk})}}.
\end{equation}

From Theorem \ref{the3}, we have
\begin{equation}\label{Clk2}
\hat{c}_{(l,T_{lk})}^i(\widehat{A})
= P \ltimes \theta(l,k,i,\widehat{A}),
\end{equation}
where $\theta(l,k,i,\widehat{A})=\Xi(l,k) \ltimes \Phi^{T}(\widehat{A}) \ltimes ({\delta}_m^i)^{T} \ltimes {\mathbb{C}}^{T} \in \mathbb{R}_{e^m \times 1}$.

For every $l \in [1;r]$, $k \in [1;e_l]$ and $ i \in G_l$, we define the matrix $\theta_{l,k,i} \in \mathbb{R}_{e^m \times \widetilde{h}}$ as
\begin{equation}\label{thetalk}
\operatorname{Col}_{\alpha}(\theta_{l,k,i})=\theta(l,k,i,\widehat{A}^{\alpha}), \alpha \in [1;\widetilde{h}].
\end{equation}

By virtue of (\ref{Clk1}), (\ref{Clk2}) and (\ref{thetalk}), we get
$$
\operatorname{Col}_{\alpha}(L_{\hat{c}_{(l,T_{lk})}^i})=P\ltimes \operatorname{Col}_{\alpha}(\theta_{l,k,i}),\alpha \in [1;\widetilde{h}].
$$
Therefore, for each $l \in [1;r]$, $k \in [1;e_l]$ and $ i \in G_l$, we have
\begin{align}\label{LC1}
L_{\hat{c}_{(l,T_{lk})}^i}=P\ltimes \theta_{l,k,i}.
\end{align}

Similarly, we can easily get
\begin{equation}\label{LC}
L_{\widehat{C}_{(l,T_{lk})}}=P\ltimes \psi_{l,k},
\end{equation}
where 
\begin{align*}
    \operatorname{Col}_{\alpha}(\psi_{l,k})=&\psi(l,k,\widehat{A}^{\alpha})\\
    =&\frac{1}{m_l} \Xi(l,k) \ltimes \Phi^{T}(\widehat{A}^{\alpha}) \ltimes \sum_{i \in G_l} ({\delta}_m^i)^{T} \ltimes {\mathbb{C}}^{T}
\end{align*}
and $\alpha \in [1;\widetilde{h}]$.
\end{proof} 

\subsection{(Strongly) potential equation of the multi-group ex-ante agent game\label{second}}

In the following two theorems, we give the (strongly) potential equation of a MBG's MEAG to judge whether the MEAG is (strongly) potential. And if it is (strongly) potential, its (strongly) potential function can be obtained by solving the (strongly) potential equation.

\begin{theorem}\label{potential}
The MEAG $\widehat{\mathcal{G}}=\{\widehat{G}, \widehat{\mathcal{A}}, \widehat{C}\}$ of a MBG $\mathcal{G}=( G,\mathcal{T},\mathcal{A},p,C)$ is potential if and only if $\widehat{\mathcal{G}}$'s potential equation
\begin{equation}\label{po}
\Lambda \xi=\Psi P^{T},
\end{equation}
has a solution, where
{\setlength{\arraycolsep}{3pt}
\renewcommand{\arraystretch}{0.1}
\[
\Lambda=\left[\begin{array}{ccccc}
I_{\widetilde{h}} &\Lambda_{(1,T_{11})}^T  &0 &\cdots &0\\
I_{\widetilde{h}} &0  &\Lambda_{(1,T_{12})}^T   &\cdots &0\\
\vdots  &\vdots  & \vdots & \ddots  &\vdots \\
I_{\widetilde{h}} &0  &0   &\cdots &\Lambda_{(r,T_{re_r})}^T\\
\end{array}\right],
\]
}
$\Psi=[\psi_{1,1} , \psi_{1,2} , \cdots , \psi_{r,e_r}]^T$, $\xi=[\xi_0 , \xi_{(1,T_{11})} , \xi_{(1,T_{12})} , \cdots ,\\ \xi_{(r,T_{re_r})}]^T$ is unknown, $\xi_0 \in \mathbb{R}_{1 \times \widetilde{h}}$, and $\xi_{(l,T_{lk})} \in \mathbb{R}_{1 \times \frac{\widetilde{h}}{g_l}}$. In addition, if the solution of potential equation (\ref{po}) exists, the structure vector of $\widehat{\mathcal{G}}$'s potential function $\widehat{F}$ is
\begin{equation}\label{L_F_jian}
L_{\widehat{F}}=\xi_0^T.
\end{equation}
\end{theorem}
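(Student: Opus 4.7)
The plan is to translate the definition of a potential game, applied to the MEAG, into a system of linear matrix equations, and then recognize that system as exactly $\Lambda\xi=\Psi P^{T}$. By \Cref{MBPG} specialized to the normal form $\widehat{\mathcal{G}}$, the game $\widehat{\mathcal{G}}$ is potential iff there exists $\widehat{F}:\widehat{\mathcal{A}}\to\mathbb{R}$ such that for each agent $(l,T_{lk})\in\widehat{G}$, the function
\[
\widehat{C}_{(l,T_{lk})}(\widehat{A})-\widehat{F}(\widehat{A})
\]
is independent of the coordinate $\widehat{A}_{(l,T_{lk})}$. So first I would reformulate ``is potential'' as ``for every $(l,T_{lk})$ the difference $\widehat{C}_{(l,T_{lk})}-\widehat{F}$ does not depend on $\widehat{A}_{(l,T_{lk})}$'', which is the standard starting point in the STP potential-equation framework used in \cite{cheng2014,Wu}.

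Next I would pass to algebraic form. By \Cref{pro3}, $\widehat{C}_{(l,T_{lk})}$ has structure vector $L_{\widehat{C}_{(l,T_{lk})}}=P\psi_{l,k}$, and any candidate potential $\widehat{F}$ has a row structure vector $L_{\widehat{F}}\in\mathbb{R}_{1\times\widetilde h}$. The key algebraic fact I would invoke is the standard STP characterization: a pseudo-logical function of $\widehat{A}$ is independent of the coordinate $\widehat{A}_{(l,T_{lk})}\in\Delta_{g_l}$ iff its structure vector lies in the row space of a fixed matrix $\Lambda_{(l,T_{lk})}$ of size $(\widetilde h/g_l)\times\widetilde h$; equivalently, it equals $\xi_{(l,T_{lk})}\Lambda_{(l,T_{lk})}$ for some $\xi_{(l,T_{lk})}\in\mathbb{R}_{1\times\widetilde h/g_l}$. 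Concretely, $\Lambda_{(l,T_{lk})}$ is built from swap matrices and $\mathbf 1_{g_l}^{T}$ so that multiplying by it replicates entries across all values of $\widehat{A}_{(l,T_{lk})}$; this is exactly the construction used for normal-form potential equations and needs only to be placed at the position of the $(l,T_{lk})$-th variable in the STP $\ltimes_{l,k}\widehat{A}_{(l,T_{lk})}$.

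Combining these two observations, potentiality of $\widehat{\mathcal{G}}$ is equivalent to the existence of $L_{\widehat{F}}$ and row vectors $\xi_{(l,T_{lk})}$ satisfying, for every agent $(l,T_{lk})$,
\[
P\psi_{l,k}-L_{\widehat{F}}=\xi_{(l,T_{lk})}\Lambda_{(l,T_{lk})}.
\]
Transposing and setting $\xi_0=L_{\widehat{F}}$ gives, for each $(l,T_{lk})$,
\[
I_{\widetilde h}\,\xi_0^{T}+\Lambda_{(l,T_{lk})}^{T}\xi_{(l,T_{lk})}^{T}=\psi_{l,k}^{T}P^{T}.
\]
Stacking these identities over all agents in the order used to define $\Psi=[\psi_{1,1},\dots,\psi_{r,e_r}]^{T}$ and $\xi=[\xi_0,\xi_{(1,T_{11})},\dots,\xi_{(r,T_{re_r})}]^{T}$ produces exactly the block system $\Lambda\xi=\Psi P^{T}$ stated in the theorem. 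The reverse direction then just reads this equivalence backwards: any solution $\xi$ of $\Lambda\xi=\Psi P^{T}$ yields a potential function $\widehat{F}$ with $L_{\widehat{F}}=\xi_0^{T}$ by defining $\widehat{F}(\widehat{A})=\xi_0^{T}\ltimes_{l,k}\widehat{A}_{(l,T_{lk})}$, which is why $L_{\widehat{F}}=\xi_0^{T}$.

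The main obstacle I expect is the bookkeeping step: verifying that the specific matrix $\Lambda_{(l,T_{lk})}^{T}$ (implicitly defined in the theorem statement) is precisely the parametrization of structure vectors of functions that do not depend on $\widehat{A}_{(l,T_{lk})}$, and that when placed into the block matrix $\Lambda$ the index ordering matches $\Psi$ and $\xi$. This is a routine but technical STP computation using \Cref{lem1}(iv)--(vi) and the swap matrices; once that identification is made, equivalence of potentiality with the solvability of $\Lambda\xi=\Psi P^{T}$ is immediate, and the formula $L_{\widehat{F}}=\xi_0^{T}$ follows by construction.
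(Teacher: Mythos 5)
Your proposal is correct and follows essentially the same route as the paper: recast potentiality as the requirement that $\widehat{C}_{(l,T_{lk})}-\widehat{F}$ be independent of $\widehat{A}_{(l,T_{lk})}$, parametrize such functions via the structure vector identity $L_{H_{(l,T_{lk})}}\ltimes\Lambda_{(l,T_{lk})}$ (the paper builds $\Lambda_{(l,T_{lk})}$ as a Kronecker product with $\mathbf 1^{T}$ in the $(l,k)$ slot rather than via swap matrices, but this is the same object), substitute $L_{\widehat{C}_{(l,T_{lk})}}=P\ltimes\psi_{l,k}$ from \Cref{pro3}, transpose, and stack over agents to obtain $\Lambda\xi=\Psi P^{T}$ with $\xi_0=L_{\widehat{F}}^{T}$. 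No gaps beyond the routine bookkeeping you already flag.
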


\begin{proof}
$\widehat{\mathcal{G}}$ is potential if and only if there exists a function $\widehat{F}$ which satisfies
\begin{align}\label{F_jian1}
&\widehat{C}_{(l,T_{lk})}( \widehat{A} _{\left(l, T_{lk}\right)}, \widehat{A}_{-\left(l, T_{lk}\right)})-\widehat{C}_{(l,T_{lk})}( \widehat{A}^{\prime}_{\left(l, T_{lk}\right)}, \widehat{A}_{-\left(l, T_{lk}\right)}) \nonumber \\
 =&\widehat{F}( \widehat{A} _{\left(l, T_{lk}\right)}, \widehat{A}_{-\left(l, T_{lk}\right)})-\widehat{F}( \widehat{A}^{\prime}_{\left(l, T_{lk}\right)}, \widehat{A}_{-\left(l, T_{lk}\right)}),
\end{align}
for any agent $(l,T_{lk}) \in \widehat{G}$ and any $\widehat{A} _{\left(l, T_{lk}\right)}, \widehat{A}^{\prime}_{\left(l, T_{lk}\right)} \in \widehat{\mathcal{A}}_{(l,T_{lk})}$, $ \widehat{A}_{-\left(l, T_{lk}\right)} \in \widehat{\mathcal{A}}_{-(l,T_{lk})}$.

It's easy to see that (\ref{F_jian1}) is equivalent to the equation
\begin{align}\label{F_jian2}
&\widehat{C}_{(l,T_{lk})}( \widehat{A} _{\left(l, T_{lk}\right)}, \widehat{A}_{-\left(l, T_{lk}\right)})-\widehat{F}( \widehat{A} _{\left(l, T_{lk}\right)}, \widehat{A}_{-\left(l, T_{lk}\right)})\nonumber \\
 =&\widehat{C}_{(l,T_{lk})}( \widehat{A}^{\prime}_{\left(l, T_{lk}\right)}, \widehat{A}_{-\left(l, T_{lk}\right)}) -\widehat{F}( \widehat{A}^{\prime}_{\left(l, T_{lk}\right)}, \widehat{A}_{-\left(l, T_{lk}\right)}),
\end{align}
which reveals that $\widehat{C}_{(l,T_{lk})}( \widehat{A} _{\left(l, T_{lk}\right)}, \widehat{A}_{-\left(l, T_{lk}\right)})-\widehat{F}( \widehat{A} _{\left(l, T_{lk}\right)}, \widehat{A}_{-\left(l, T_{lk}\right)})$ is independent of $\widehat{A} _{\left(l, T_{lk}\right)}$. For every $l \in [1;r]$, $k \in [1;e_l]$, we define
\begin{align}
H_{(l,T_{lk})}(A _{\left(l, T_{lk}\right)}, \widehat{A}_{-\left(l, T_{lk}\right)}):=\widehat{C}_{(l,T_{lk})}(\widehat{A})-\widehat{F}( \widehat{A}),\nonumber
\end{align}
where $A _{\left(l, T_{lk}\right)}$ represents the missing variable $ \widehat{A} _{\left(l, T_{lk}\right)}$. Therefore we have
\begin{align}\label{H}
\widehat{C}_{(l,T_{lk})}(\widehat{A})=\widehat{F}( \widehat{A})+H_{(l,T_{lk})}(A _{\left(l, T_{lk}\right)}, \widehat{A}_{-\left(l, T_{lk}\right)}).
\end{align}

Then we get that $\widehat{\mathcal{G}}$ is potential if and only if there exist functions $H_{(l,T_{lk})}(A _{\left(l, T_{lk}\right)}, \widehat{A}_{-\left(l, T_{lk}\right)}), l \in [1;r],k \in [1;e_l]$ which are independent of $\widehat{A} _{\left(l, T_{lk}\right)}$ satisfying (\ref{H}).

The vector form of (\ref{H}) can be expressed as
\begin{align}\label{H2}
L_{\widehat{C}_{(l,T_{lk})}} &\mathop{\ltimes}\limits_{l=1}^r  \mathop{\ltimes}\limits_{k=1}^{e_l} \widehat{A}_{{(l,T_{lk})}}=L_{\widehat{F}} \mathop{\ltimes}\limits_{l=1}^r  \mathop{\ltimes}\limits_{k=1}^{e_l} \widehat{A}_{{(l,T_{lk})}}+L_{H_{(l,T_{lk})}}\nonumber \\
&\mathop{\ltimes}\limits_{i=1}^{l-1}  \mathop{\ltimes}\limits_{j=1}^{e_i} \widehat{A}_{{(i,T_{ij})}}
\mathop{\ltimes}\limits_{j\in [1;e_l]}^{j \neq k}\widehat{A}_{{(l,T_{lj})}}\mathop{\ltimes}\limits_{i=l+1}^r  \mathop{\ltimes}\limits_{j=1}^{e_i} \widehat{A}_{{(i,T_{ij})}}.
\end{align}
where $L_{\widehat{F}}$, $L_{H_{(l,T_{lk})}}$ denote the structure vectors of $\widehat{F}$ and $H_{(l,T_{lk})}$, respectively.

Define
$$
\Lambda_{(l,T_{lk})}:=\mathop{\otimes}\limits_{i=1}^r \mathop{\otimes}\limits_{j=1}^{e_i} \lambda_{ij},
$$
where $\lambda_{lk}=\left\{\begin{array}{ll}
\mathbf{1}_{h_i}^{T}, & i=l,j=k, \\
I_{h_i}, & otherwise. \\
\end{array}\right.$

Then we have
$$
\begin{aligned}
\Lambda_{(l,T_{lk})}\mathop{\ltimes}\limits_{i=1}^r  \mathop{\ltimes}\limits_{j=1}^{e_i} \widehat{A}_{{(i,T_{ij})}}=&\mathop{\ltimes}\limits_{i=1}^{l-1}  \mathop{\ltimes}\limits_{j=1}^{e_i} \widehat{A}_{{(i,T_{ij})}}
\mathop{\ltimes}\limits_{j\in [1;e_l]}^{j \neq k}\widehat{A}_{{(l,T_{lj})}}\\
&\mathop{\ltimes}\limits_{i=l+1}^r  \mathop{\ltimes}\limits_{j=1}^{e_i} \widehat{A}_{{(i,T_{ij})}}.
\end{aligned}
$$

So (\ref{H2}) can be rewritten as
\begin{align}
L_{\widehat{C}_{(l,T_{lk})}} \mathop{\ltimes}\limits_{l=1}^r  \mathop{\ltimes}\limits_{k=1}^{e_l} \widehat{A}_{{(l,T_{lk})}}=&L_{\widehat{F}} \mathop{\ltimes}\limits_{l=1}^r  \mathop{\ltimes}\limits_{k=1}^{e_l} \widehat{A}_{{(l,T_{lk})}}+L_{H_{(l,T_{lk})}}\ltimes \nonumber \\
&\Lambda_{(l,T_{lk})}\mathop{\ltimes}\limits_{i=1}^r  \mathop{\ltimes}\limits_{j=1}^{e_i} \widehat{A}_{{(i,T_{ij})}}.
\end{align}

Because $\mathop{\ltimes}\limits_{l=1}^r  \mathop{\ltimes}\limits_{k=1}^{e_l} \widehat{A}_{{(l,T_{lk})}}$ are arbitrary, we get
\begin{align}
L_{\widehat{C}_{(l,T_{lk})}} =L_{\widehat{F}} +L_{H_{(l,T_{lk})}}\ltimes \Lambda_{(l,T_{lk})}.
\end{align}

By virtue of (\ref{LC}), we have
\begin{align}\label{PL}
P\ltimes \psi_{l,k} =L_{\widehat{F}} +L_{H_{(l,T_{lk})}}
\ltimes \Lambda_{(l,T_{lk})}.
\end{align}

By taking transpose of (\ref{PL}), we can obtain that for every $l \in [1;r]$, $k \in [1;e_l]$,
\begin{align}\label{theta}
 \psi^{T}_{l,k} \ltimes P^{T} =L^{T}_{\widehat{F}} +\Lambda^{T}_{(l,T_{lk})}\ltimes L^{T}_{H_{(l,T_{lk})}}.
\end{align}

Denote $\xi_0=L^{T}_{\widehat{F}}$, $\xi_{(l,T_{lk})}=L^{T}_{H_{(l,T_{lk})}}$, then (\ref{theta}) can be represented by (\ref{po}).
\end{proof} 

\begin{theorem}\label{the4}
The MEAG $\widehat{\mathcal{G}}=\{\widehat{G}, \widehat{\mathcal{A}}, \widehat{C}\}$ of a MBG $\mathcal{G}=( G,\mathcal{T},\mathcal{A},p,C)$ is strongly potential if and only if $\widehat{\mathcal{G}}$'s strongly potential equation
\begin{equation}\label{po1}
\Lambda \xi=\Theta P^{T},
\end{equation}
has a solution, where
{\setlength{\arraycolsep}{3pt}
\renewcommand{\arraystretch}{0.1}
$$
\Lambda=\left[\begin{array}{ccccc}
I_{\widetilde{h}} &\Lambda_{(1,T_{11},1)}^T  &0 &\cdots &0\\
I_{\widetilde{h}} &0  &\Lambda_{(1,T_{11},2)}^T   &\cdots &0\\
\vdots  &\vdots  & \vdots & \ddots  &\vdots \\
I_{\widetilde{h}} &0  &0   &\cdots &\Lambda_{(r,T_{re_r},\overline{m}_{r})}^T\\
\end{array}\right],
$$
}
\linespread{0.2} \selectfont
{\setlength{\arraycolsep}{4pt}
$$\xi=[\begin{array}{ccccc}\xi_0 & \xi_{(1,T_{11},1)} & \xi_{(1,T_{11},2)} & \cdots & \xi_{(r,T_{re_r},\overline{m}_{r})}\end{array}]^T,$$
}
{\setlength{\arraycolsep}{4.2pt}
$$\Theta=[\begin{array}{cccc}\theta_{1,1,1} & \theta_{1,1,2} & \cdots & \theta_{r,e_r,\overline{m}_{r}}\end{array}]^T,$$
}
$\xi$ is unknown, $\xi_0 \in \mathbb{R}_{1 \times \widetilde{h}}$, and $\xi_{(l,T_{lk},i)} \in \mathbb{R}_{1 \times \frac{\widetilde{h}}{g_l}}$. In addition, if the solution of strongly potential equation (\ref{po1}) exists, the structure vector of $\widehat{\mathcal{G}}$'s strongly potential function $\widehat{F}$ is
\begin{equation}\label{L_F_jian1}
L_{\widehat{F}}=\xi_0^T.
\end{equation}
\end{theorem}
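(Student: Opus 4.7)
The plan is to mirror the derivation of Theorem \ref{potential} almost verbatim, with the key change that the aggregated agent payoff $\widehat{C}_{(l,T_{lk})}$ is replaced by each individual player payoff $\widehat{c}_{(l,T_{lk})}^i$ for every $i\in G_l$, so that the system of conditions grows from one equation per agent $(l,T_{lk})$ to one equation per triple $(l,T_{lk},i)$. By the strong part of Definition \ref{MBPG} lifted to the MEAG, $\widehat{\mathcal{G}}$ is strongly potential iff there is a function $\widehat{F}$ such that for every $(l,T_{lk})\in\widehat{G}$, every $i\in G_l$, and every pair $\widehat{A}=(\widehat{A}_{(l,T_{lk})},\widehat{A}_{-(l,T_{lk})})$, $\widehat{A}'=(\widehat{A}'_{(l,T_{lk})},\widehat{A}_{-(l,T_{lk})})$, the difference $\widehat{c}_{(l,T_{lk})}^i(\widehat{A})-\widehat{F}(\widehat{A})$ is independent of $\widehat{A}_{(l,T_{lk})}$. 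Hence, as in the proof of Theorem \ref{potential}, I would introduce auxiliary functions $H_{(l,T_{lk},i)}(A_{(l,T_{lk})},\widehat{A}_{-(l,T_{lk})})$ and rewrite the strong potentiality condition as the identity $\widehat{c}_{(l,T_{lk})}^i(\widehat{A})=\widehat{F}(\widehat{A})+H_{(l,T_{lk},i)}(A_{(l,T_{lk})},\widehat{A}_{-(l,T_{lk})})$ for all $l,k,i\in G_l$.

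Next I would pass to the STP algebraic form. Using Proposition \ref{pro3}, the left side has structure vector $L_{\widehat{c}_{(l,T_{lk})}^i}=P\ltimes\theta_{l,k,i}$. For the right side I would reuse the Kronecker construction from Theorem \ref{potential}, defining $\Lambda_{(l,T_{lk},i)}:=\otimes_{i'=1}^{r}\otimes_{j=1}^{e_{i'}}\lambda_{i'j}$ with $\lambda_{lk}=\mathbf{1}_{h_l}^{T}$ and all other $\lambda$-blocks equal to identities. This matrix is precisely the one that removes the $\widehat{A}_{(l,T_{lk})}$-variable from a structure vector, so that the functional identity becomes the vector identity
\begin{equation*}
P\ltimes\theta_{l,k,i}=L_{\widehat{F}}+L_{H_{(l,T_{lk},i)}}\ltimes\Lambda_{(l,T_{lk},i)}.
\end{equation*}
Taking transposes and setting $\xi_0:=L_{\widehat{F}}^{T}$ and $\xi_{(l,T_{lk},i)}:=L_{H_{(l,T_{lk},i)}}^{T}$, each such identity becomes $\xi_0+\Lambda_{(l,T_{lk},i)}^{T}\xi_{(l,T_{lk},i)}=\theta_{l,k,i}^{T}P^{T}$, which is one block row of the claimed equation $\Lambda\xi=\Theta P^{T}$. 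Stacking over all $(l,k,i)$ with $i$ ranging through $G_l=\{\overline{m}_{l-1}+1,\dots,\overline{m}_l\}$ in the order prescribed in the theorem statement reproduces exactly the block matrix $\Lambda$ and the stacked vector $\Theta$; the formula $L_{\widehat{F}}=\xi_0^{T}$ is then just the definition.

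The main obstacle is not conceptual but combinatorial: in the non-strong case there were $\sum_{l=1}^{r}e_l$ blocks, while in the strong case there are $\sum_{l=1}^{r}m_l e_l$ blocks, one per player per agent-type. I will need to be careful that the ordering $(1,T_{11},1),(1,T_{11},2),\dots,(r,T_{re_r},\overline{m}_r)$ used to index the rows of $\Lambda$ and $\Theta$ matches the ordering used to index the unknown sub-vectors $\xi_{(l,T_{lk},i)}$, and that each $\Lambda_{(l,T_{lk},i)}$ really only depends on the pair $(l,k)$ (so rows sharing $(l,k)$ have identical $\Lambda$-blocks but different $\theta$-blocks). Apart from this bookkeeping, the argument is a direct parallel of Theorem \ref{potential}, with the single-payoff identity $\widehat{C}_{(l,T_{lk})}=\widehat{F}+H_{(l,T_{lk})}$ replaced throughout by the family of identities $\widehat{c}_{(l,T_{lk})}^{i}=\widehat{F}+H_{(l,T_{lk},i)}$ indexed by $i\in G_l$, so that the same vector/transpose manipulation yields the strongly potential equation.
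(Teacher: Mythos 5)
Your proposal follows the paper's own proof essentially step for step: the same reduction of strong potentiality to the family of identities $\widehat{c}^{\,i}_{(l,T_{lk})}=\widehat{F}+H^{i}_{(l,T_{lk})}$ indexed by $(l,k,i)$, the same passage to structure vectors via Proposition \ref{pro3}, the same Kronecker variable-deletion matrices, and the same transpose-and-stack step yielding \eqref{po1}. The only quibble is a block-size slip: the deletion block at position $(l,k)$ in $\Lambda_{(l,T_{lk},i)}$ should be $\mathbf{1}_{g_l}^{T}$ (consistent with $\xi_{(l,T_{lk},i)}\in\mathbb{R}_{1\times \widetilde{h}/g_l}$), not $\mathbf{1}_{h_l}^{T}$, though your stated intent --- the matrix that removes the $\widehat{A}_{(l,T_{lk})}$ variable --- is the correct one.
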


\begin{proof}
$\widehat{\mathcal{G}}$ is strongly potential if and only if there exists a function $\widehat{F}$ which satisfies
\begin{align}\label{F_jian11}
&\hat{c}_{(l,T_{lk})}^i( \widehat{A} _{\left(l, T_{lk}\right)}, \widehat{A}_{-\left(l, T_{lk}\right)})-\hat{c}_{(l,T_{lk})}^i( \widehat{A}^{\prime}_{\left(l, T_{lk}\right)}, \widehat{A}_{-\left(l, T_{lk}\right)}) \nonumber \\
 =&\widehat{F}( \widehat{A} _{\left(l, T_{lk}\right)}, \widehat{A}_{-\left(l, T_{lk}\right)})-\widehat{F}( \widehat{A}^{\prime}_{\left(l, T_{lk}\right)}, \widehat{A}_{-\left(l, T_{lk}\right)}),
\end{align}
for any agent $(l,T_{lk}) \in \widehat{G}$ and any $i \in G_l$, $\widehat{A} _{\left(l, T_{lk}\right)}, \widehat{A}^{\prime}_{\left(l, T_{lk}\right)} \in \widehat{\mathcal{A}}_{(l,T_{lk})}$, $ \widehat{A}_{-\left(l, T_{lk}\right)} \in \widehat{\mathcal{A}}_{-(l,T_{lk})}$.

It's easy to see that (\ref{F_jian11}) is equivalent to the equation
\begin{align}\label{F_jian2}
&\hat{c}_{(l,T_{lk})}^i( \widehat{A} _{\left(l, T_{lk}\right)}, \widehat{A}_{-\left(l, T_{lk}\right)})-\widehat{F}( \widehat{A} _{\left(l, T_{lk}\right)}, \widehat{A}_{-\left(l, T_{lk}\right)})\nonumber \\
 =&\hat{c}_{(l,T_{lk})}^i( \widehat{A}^{\prime}_{\left(l, T_{lk}\right)}, \widehat{A}_{-\left(l, T_{lk}\right)}) -\widehat{F}( \widehat{A}^{\prime}_{\left(l, T_{lk}\right)}, \widehat{A}_{-\left(l, T_{lk}\right)}),
\end{align}
which reveals that $\hat{c}_{(l,T_{lk})}^i( \widehat{A} _{\left(l, T_{lk}\right)}, \widehat{A}_{-\left(l, T_{lk}\right)})-\widehat{F}( \widehat{A} _{\left(l, T_{lk}\right)}, \widehat{A}_{-\left(l, T_{lk}\right)})$ is independent of $\widehat{A} _{\left(l, T_{lk}\right)}$. For every $l \in [1;r]$, $k \in [1;e_l]$, we define
\begin{align}
H_{(l,T_{lk})}^i(A _{\left(l, T_{lk}\right)}, \widehat{A}_{-\left(l, T_{lk}\right)}):=\hat{c}_{(l,T_{lk})}^i(\widehat{A})-\widehat{F}( \widehat{A}),\nonumber
\end{align}
where $A _{\left(l, T_{lk}\right)}$ represents the missing variable $ \widehat{A} _{\left(l, T_{lk}\right)}$. Therefore we have
\begin{align}\label{H1}
\hat{c}_{(l,T_{lk})}^i(\widehat{A})=\widehat{F}( \widehat{A})+H_{(l,T_{lk})}^i(A _{\left(l, T_{lk}\right)}, \widehat{A}_{-\left(l, T_{lk}\right)}).
\end{align}

Then we get that $\widehat{\mathcal{G}}$ is strongly potential if and only if there exist functions $H_{(l,T_{lk})}^i(A _{\left(l, T_{lk}\right)}, \widehat{A}_{-\left(l, T_{lk}\right)}), l \in [1;r],k \in [1;e_l]$, $i \in G_l$ which are independent of $\widehat{A} _{\left(l, T_{lk}\right)}$ satisfying (\ref{H1}).

The vector form of (\ref{H1}) can be expressed as
\begin{align}\label{H21}
L_{\hat{c}_{(l,T_{lk})}^i} &\mathop{\ltimes}\limits_{b=1}^r  \mathop{\ltimes}\limits_{d=1}^{e_b} \widehat{A}_{{(b,T_{bd})}}=L_{\widehat{F}} \mathop{\ltimes}\limits_{b=1}^r  \mathop{\ltimes}\limits_{d=1}^{e_b} \widehat{A}_{{(b,T_{bd})}}+L_{H_{(l,T_{lk})}^i}\nonumber \\
&\mathop{\ltimes}\limits_{b=1}^{l-1}  \mathop{\ltimes}\limits_{d=1}^{e_b} \widehat{A}_{{(b,T_{bd})}}
\mathop{\ltimes}\limits_{d\in [1;e_l]}^{d \neq k}\widehat{A}_{{(l,T_{ld})}}\mathop{\ltimes}\limits_{b=l+1}^r  \mathop{\ltimes}\limits_{d=1}^{e_b} \widehat{A}_{{(b,T_{bd})}}.
\end{align}
where $L_{\widehat{F}}$, $L_{H_{(l,T_{lk})}^i}$ denote the structure vectors of $\widehat{F}$ and $H_{(l,T_{lk})}^i$, respectively.

Define
$$
\Lambda_{(l,T_{lk},i)}:=\mathop{\otimes}\limits_{b=1}^r \mathop{\otimes}\limits_{d=1}^{e_b} \lambda_{bd},
$$
where $\lambda_{bd}=\left\{\begin{array}{ll}
\mathbf{1}_{g_b}^{T}, & b=l,d=k, \\
I_{g_b}, & otherwise. \\
\end{array}\right.$

Then we have
$$
\begin{aligned}
\Lambda_{(l,T_{lk},i)}\mathop{\ltimes}\limits_{b=1}^r  \mathop{\ltimes}\limits_{d=1}^{e_b} \widehat{A}_{{(b,T_{bd})}}=&\mathop{\ltimes}\limits_{b=1}^{l-1}  \mathop{\ltimes}\limits_{d=1}^{e_b} \widehat{A}_{{(b,T_{bd})}}
\mathop{\ltimes}\limits_{d\in [1;e_l]}^{d \neq k}\widehat{A}_{{(l,T_{ld})}}\\
&\mathop{\ltimes}\limits_{b=l+1}^r  \mathop{\ltimes}\limits_{d=1}^{e_b} \widehat{A}_{{(b,T_{bd})}}.
\end{aligned}
$$

So (\ref{H21}) can be rewritten as
\begin{align}
L_{\hat{c}_{(l,T_{lk})}^i} \mathop{\ltimes}\limits_{b=1}^r  \mathop{\ltimes}\limits_{d=1}^{e_b} \widehat{A}_{{(b,T_{bd})}}=&L_{\widehat{F}} \mathop{\ltimes}\limits_{b=1}^r  \mathop{\ltimes}\limits_{d=1}^{e_b} \widehat{A}_{{(b,T_{bd})}}+L_{H_{(l,T_{lk})}^i}\ltimes \nonumber \\
&\Lambda_{(l,T_{lk},i)}\mathop{\ltimes}\limits_{b=1}^r  \mathop{\ltimes}\limits_{d=1}^{e_b} \widehat{A}_{{(b,T_{bd})}}.
\end{align}

Because $\mathop{\ltimes}\limits_{b=1}^r  \mathop{\ltimes}\limits_{d=1}^{e_b} \widehat{A}_{{(b,T_{bd})}}$ are arbitrary, we get
\begin{align}
L_{\hat{c}_{(l,T_{lk})}^i} =L_{\widehat{F}} +L_{H_{(l,T_{lk})}^i}\ltimes \Lambda_{(l,T_{lk},i)}.
\end{align}

By virtue of (\ref{LC1}), we have
\begin{align}\label{PL1}
P\ltimes \theta_{l,k,i} =L_{\widehat{F}} +L_{H_{(l,T_{lk})}^i}
\ltimes \Lambda_{(l,T_{lk},i)}.
\end{align}

By taking transpose of (\ref{PL1}), we can obtain that for every $l \in [1;r]$, $k \in [1;e_l]$,$i \in G_l$,
\begin{align}\label{theta1}
 \theta^{T}_{l,k,i} \ltimes P^{T} =L^{T}_{\widehat{F}} +\Lambda^{T}_{(l,T_{lk},i)}\ltimes L^{T}_{H_{(l,T_{lk})}^i}.
\end{align}

Denote $\xi_0=L^{T}_{\widehat{F}}$, $\xi_{(l,T_{lk},i)}=L^{T}_{H_{(l,T_{lk})}^i}$, then (\ref{theta1}) can be represented by (\ref{po1}).
\end{proof} 

Based on the above, we can get a (strongly) potential function of the MEAG which is (strongly) potential. And the agent action \textit{profiles} maximize this (strongly) potential function are (strongly) Nash equilibria of the MEAG.

\subsection{Algorithms and examples for finding (strongly) MBNE of a MBG\label{third}}

According to \Cref{th:thm1}, we can easily obtain the (strongly) MBNE of the MBG based on the (strongly) NE of the MBG's MEAG obtained above. In summary, we give \Cref{alg1,alg2} for detailed steps to find (strongly) MBNE of a MBG.

\begin{algorithm}[!htb]
\caption{Finding MBNE of a MBG}
\label{alg1}
\begin{algorithmic}
\STATE Transform a MBG $\mathcal{G}$ into its corresponding MEAG;\\ 
\STATE Calculate the algebraic form of agent $(l,T_{lk})$'s payoff function by \Cref{c_jian} ;\\
\IF{$\widehat{\mathcal{G}}$'s potential equation (\ref{po}) has a solution}
\STATE Calculate the structure vevtor of $\widehat{\mathcal{G}}$'s potential function $\widehat{F}$ by \Cref{L_F_jian};\\
\STATE Find the agent action \textit{profiles} $\widehat{A}^*$ that maximize the potential function $\widehat{F}$;\\
\STATE Calculate the MBG's MBNE by $\Gamma^{-1}(\widehat{A}^*)$;\\ 
\ELSE
\STATE The MEAG is not potential;\\
\ENDIF
\end{algorithmic}
\end{algorithm}

\begin{algorithm}[!htb]
\caption{Finding strongly MBNE of a MBG}
\label{alg2}
\begin{algorithmic}
\STATE Transform the MBG $\mathcal{G}$ into its corresponding MEAG;\\
\STATE Calculate the algebraic form of player $i$'s payoff function in agent $(l,T_{lk})$ by \Cref{Sc_jian};\\
\IF{$\widehat{\mathcal{G}}$'s strongly potential equation (\ref{po1}) has a solution}
\STATE Calculate the structure vevtor of $\widehat{\mathcal{G}}$'s strongly potential function $\widehat{F}$ by \Cref{L_F_jian1};\\
\STATE Find the agent action \textit{profiles} $\widehat{A}^*$ that maximize the strongly potential function $\widehat{F}$;\\
\STATE Calculate the MBG's strongly MBNE by $\Gamma^{-1}(\widehat{A}^*)$;\\ 
\ELSE
\STATE The MEAG is not strongly potential;\\
\ENDIF
\end{algorithmic}
\end{algorithm}

In the following, a numerical example based on \Cref{e1,e2} is given to show how to find MBNE of a MBG by using the multi-group ex-ante agent transformation. Given that the steps to find MBNE and strongly MBNE are similar, we will no longer give an example of finding strongly MBNE.

\begin{example}
For the auction problem modeled as a multi-group Bayesian game in \Cref{e1,e2}, find all of its multi-group Bayesian Nash equilibria by using \cref{alg1}.
\end{example}

According to \Cref{alg1}, we will find all MBNE of the MBG in the following. 

Firstly, based on the data given in \Cref{e1,e2}, we compute the probability vector $P$ of the probability distribution $p$ and the structural matrix $\mathbb{C}_i$ of player $i$'s payoff function, which are defined in \Cref{pt,cita1}. Arrange the types in $\mathcal{T}$ in lexicographical order as $T^1=(T_{11},T_{21})$, $T^2=(T_{11},T_{22})$, $\cdots$, $T^7=(T_{12},T_{23})$, $T^8=(T_{12},T_{24})$. {\rm Then we can get the probability vector} $P=[p(T^1),p(T^2),\cdots,p(T^8)]=[0.125,\\0.05,0.03,0.125,0.2,0.02,0.25,0.2]$ {\rm of the prior probability distribution $p$ in this MBG.} {\rm Similarly, arrange the type-action pairs in} $\mathcal{T} \times \mathcal{A}$ {\rm in lexicographical order as} $TA^1=(T_{11},T_{21},A_{11},A_{21})$, $\cdots$, $TA^{64}=(T_{12},T_{24},A_{12},A_{24})$. 
Since the auction rule in this auction problem is the first-price sealed-bid auction \cite{Elyakime1994}, we can get structure matrices of all players' payoff functions:
\begin{align*}
&\mathbb{C}_1=[c_1(TA^1),\cdots,c_1(TA^{64})]=[0,0,\cdots,0,0], \\ 
&\mathbb{C}_2=[c_2(TA^1),\cdots,c_2(TA^{64})]=[38,38,\cdots,3,3], \\
&\mathbb{C}_3=[c_3(TA^1),\cdots,c_3(TA^{64})]=[0,0,\cdots,0,0].
\end{align*}

Secondly, we transform the MBG into its corresponding multi-group ex-ante agent game $\widehat{\mathcal{G}}=\{\widehat{G},\widehat{\mathcal{A}}, \widehat{C}\}$ and calculate\\ all Nash equilibria of $\widehat{\mathcal{G}}$. In $\widehat{\mathcal{G}}$, $\widehat{G}=\{(1,T_{11}),(1,T_{12}),(2,T_{21}\\),(2,T_{22}),(2,T_{23}),(2,T_{24})\}$, $\widehat{\mathcal{A}}_{(1,T_{11})}=\widehat{\mathcal{A}}_{(1,T_{12})}=\mathcal{A}_{G_1}$, $\widehat{\mathcal{A}}_{(2,T_{21})}= \widehat{\mathcal{A}}_{(2,T_{22})}=\widehat{\mathcal{A}}_{(1,T_{23})}=\widehat{\mathcal{A}}_{(1,T_{24})}=\mathcal{A}_{G_2}$. Based on \Cref{pro2,pro3} and \Cref{the3}, we can obtain the algebraic form of each agent $(l,T_{lk})$'s payoff function, based on which the potential equation of $\widehat{\mathcal{G}}$ can be obtained by using \Cref{potential}. By solving this potential equation, we can know that $\widehat{\mathcal{G}}$ is potential and one of its potential functions $\widehat{F}$ which is showed in \Cref{fig2} can be obtained. Then we can get that $\max \widehat{F}(\widehat{A}) = \widehat{F}(\widehat{A}^{1})= \widehat{F}(\widehat{A}^{257})=\widehat{F}(\widehat{A}^{513})=\widehat{F}(\widehat{A}^{769})=\\6.855$, so $\widehat{A}^{1}=(A_{11},A_{11},A_{21},A_{21},A_{21},A_{21})$, $\widehat{A}^{257}=(A_{11}\\,A_{12},A_{21},A_{21},A_{21},A_{21})$, $\widehat{A}^{513}=(A_{12},A_{11},A_{21},A_{21},A_{21},\\A_{21})$ and $\widehat{A}^{769}=(A_{12},A_{12},A_{21},A_{21},A_{21},A_{21})$ are Nash equilibria of $\widehat{\mathcal{G}}$ which have been marked in \Cref{fig2}.

Finally, we calculate all MBNE of the MBG. According to \Cref{th:thm1}, we can know that all MBNE of the MBG are $s^{1}=\Gamma^{-1}(\widehat{A}^{1})=\left(s^{1}_1,  s^{1}_2\right)$, $s^{257}=\Gamma^{-1}(\widehat{A}^{257})=\left(s^{257}_1,  s^{257}_2\right)$, $s^{513}=\Gamma^{-1}(\widehat{A}^{513})=\left(s^{513}_1,  s^{513}_2\right)$, $s^{769}=\Gamma^{-1}(\widehat{A}^{769})=\left(s^{769}_1,  s^{769}_2\right)$.
So if group $1$'s type is $T_{11}$ or $T_{12}$, group $1$ should choose $A_{11}$ or $A_{12}$; if group $2$'s type is $T_{21}$, $T_{22}$, $T_{23}$, or $T_{24}$, group $2$ should choose action $A_{21}$. Only in this way, no group wants to adjust its strategy unilaterally if the other's strategy remain unchanged.

\begin{figure}
\centering
\includegraphics[width=3in]{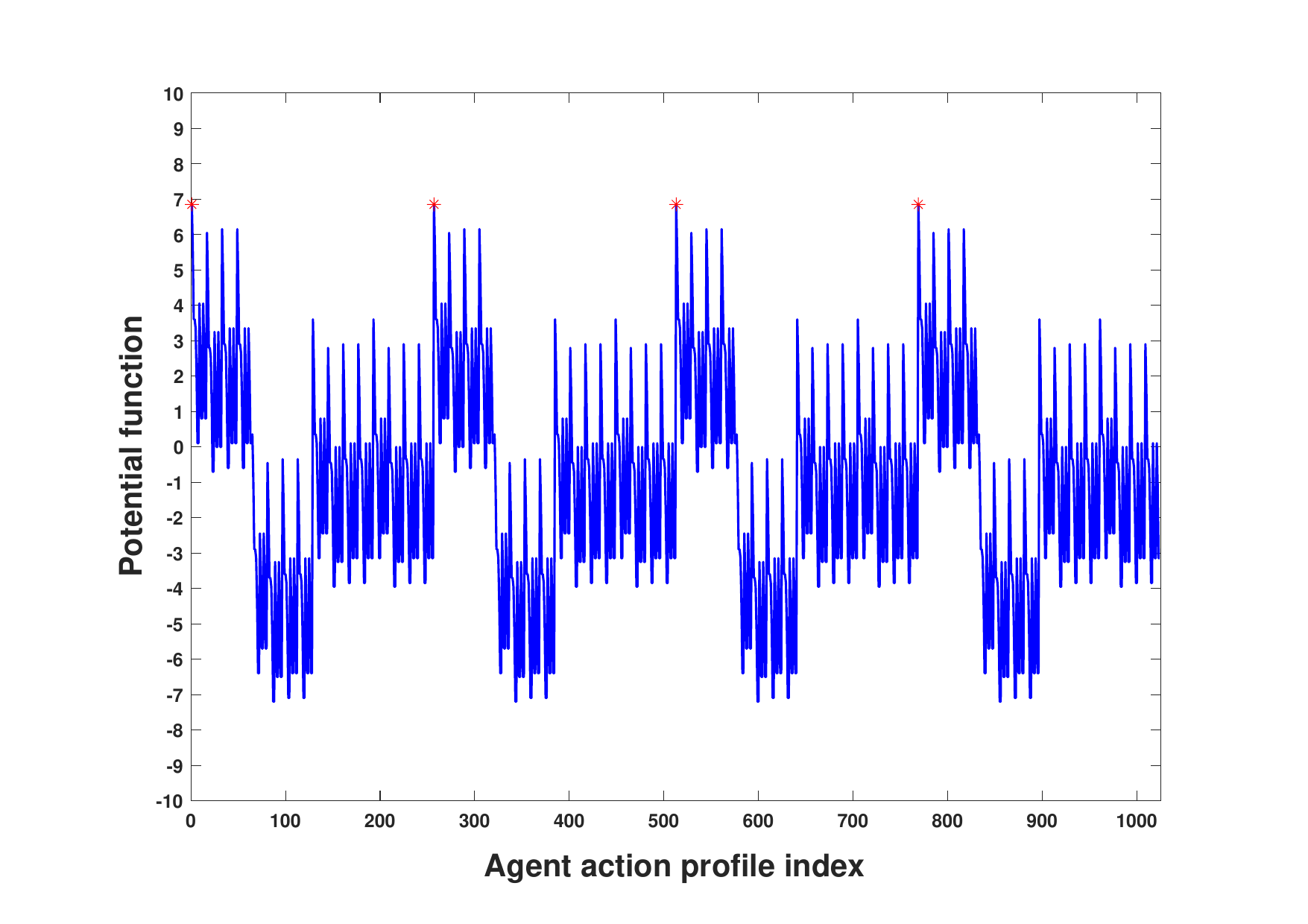}
\caption{A potential function and all MBNE of the MBG.}
\label{fig2}
\end{figure}

\begin{remark}
All examples in this paper were conducted using MATLAB R2021a. The software was operated on a Windows 10 platform with 16GB RAM and an Intel Core i7-10750H CPU, leveraging MATLAB's matrix-based computational framework, STP toolbox and other built-in toolboxes.
\end{remark}

\section{Conclusion}
\label{sec:conclusion}
In this paper, we have proposed the model of MBGs to investigate the group behavior among players in Bayesian games and have presented two types of definitions for Nash equilibria and potentiality to state different situations of group behavior. Then we have provided a transformation to convert any MBG into a MEAG for finding (strongly) MBNE of the MBG, and have offered the corresponding algorithms. Last, examples have been given to facilitate readers' understanding and verify the validity of this paper's results.

In the future, we will further attempt to study the dynamic behavior of each player in MBGs.

\bibliographystyle{IEEEtran}
\bibliography{bare_jrnl_new_sample4}
\end{document}